\DeclareMathAlphabet{\mathscr}{OT1}{pzc}{m}{it}
\begin{document}

\pagestyle{plain}

\thispagestyle{fancy}

\renewcommand{\headrulewidth}{0pt}
\fancyhf{} This work has been submitted to the IEEE for possible
publication. Copyright may be transferred without notice, after
which this version may no longer be accessible.

\title{Network Code Design for Orthogonal Two-hop Network with Broadcasting Relay: A Joint Source-Channel-Network Coding Approach}
\setcounter{page}{0}

\author{\IEEEauthorblockN{Roghayeh Joda and Farshad~Lahouti}\\
\IEEEauthorblockA{Wireless Multimedia Communications Laboratory\\
School of ECE, College of Engineering, University of Tehran\\
Email: [rjoda, lahouti]@ut.ac.ir, http://wmc.ut.ac.ir}
\thanks{}
%\thanks{M. Shell is with the Department
%of Electrical and Computer Engineering, Georgia Institute of
%Technology, Atlanta, GA, 30332 USA e-mail: (see
%http://www.michaelshell.org/contact.html) }
}

%\markboth{Journal of \LaTeX\ Class Files,~Vol.~6, No.~1, January~2007}%
%{Shell \MakeLowercase{\textit{et al.}}: Bare Demo of IEEEtran.cls
%for Journals}

\maketitle

\pagestyle{plain}

\thispagestyle{fancy}

\renewcommand{\headrulewidth}{0pt}

\fancyhf{} \lfoot{\tiny Preliminary reports on this research have
appeared in IEEE Queen's Biennial Symposium on Communications,
Kingston, ON, Canada, June 2008 and IEEE International Conference on
Wireless Communications and Signal Processing, Nanjing, China,
November 2009.}

\begin{abstract}
This paper addresses network code design for robust transmission of
sources over an orthogonal two-hop wireless network with a
broadcasting relay. The network consists of multiple sources and
destinations in which each destination, benefiting the relay signal,
intends to decode a subset of the sources. Two special instances of
this network are orthogonal broadcast relay channel and the
orthogonal multiple access relay channel. The focus is on complexity
constrained scenarios, e.g., for wireless sensor networks, where
channel coding is practically imperfect. Taking a source-channel and
network coding approach, we design the network code (mapping) at the
relay such that the average reconstruction distortion at the
destinations is minimized. To this end, by decomposing the
distortion into its components, an efficient design algorithm is
proposed. The resulting network code is nonlinear and substantially
outperforms the best performing linear network code. A motivating
formulation of a family of structured nonlinear network codes is
also presented. Numerical results and comparison with linear network
coding at the relay and the corresponding distortion-power bound
demonstrate the effectiveness of the proposed schemes and a
promising research direction.
\end{abstract}
\vspace{-2pt}
\begin{IEEEkeywords}
Network coding, relay network, wireless sensor network, minimum mean
squared error estimation.
\end{IEEEkeywords}

\section{Introduction}\label{SI}

 \IEEEPARstart{A}{} wireless sensor network consists of
a large number of small, low cost and power constrained sensor
nodes, which are spatially distributed and communicate through
wireless channels. The nodes power limitation and the wireless
propagation loss indicate that these nodes can communicate over
short distances. Communications over longer ranges may be
facilitated with multiple hops. In other words, the source
signal is relayed by one or more nodes in the network and
forwarded to the receiving nodes. We refer to a two-hop network with
a single broadcasting relay as the TN-BR. In a TN-BR, utilizing the
relay signal, each destination node intends to decode the signals
transmitted by a certain subset of the sources. Two particular TN-BR
scenarios are broadcast relay channel (BRC) (Fig. \ref{F1}(a)) and
multiple access relay channel (MARC) (Fig. \ref{F1}(b)). In such
networks, a dedicated relay assists reliable transmission by
broadcasting an appropriate signal, based on a certain relaying
scheme, to the destination(s) \cite{R1}\cite{R2}.

In this paper, we consider a special case of the TN-BR with $N$
source nodes in which every source communicates with its intended
destination(s) in a distinct orthogonal channel without interference.
The relay is half duplex and listens to whatever the sources
transmit; it then broadcasts to all the destinations in another
orthogonal channel. We refer to such a network, with $N+1$
orthogonal channels, as orthogonal TN-BR (OTN-BR). In this
orthogonal setting, a BRC may be viewed as one with several
source-destination pairs with corresponding orthogonal channels and
a dedicated broadcasting relay, referred to as the orthogonal
multi-user channel with broadcasting relay (OMC-BR)
\cite{R3}\nocite{R4}-\cite{R5}. This paper presents network code
design at the relay for an OTN-BR with constrained resources and
considers orthogonal MARC (OMARC) and OMC-BR as special cases.

Capacity achieving approaches in wireless channels with relay
require substantial memory at the relay to accommodate inter-block
encoding and large block sizes \cite{R6}-\nocite{R30}\cite{R31}.
Though, in the context of complexity constrained wireless sensor
networks simple channel codes and optimized memoryless mappings at
the relay \cite{R7}\nocite{R8,R9}-\cite{R10} are of practical
interest. In \cite{R7} and \cite{R8}, for a single source
communicating a continuous signal to a single destination over a
relay channel or through multi-hop relays, memoryless relay mappings
maximizing the SNR at the destination are presented. In IEEE
802.15.4 \cite{R11}, the widely popular standard for wireless sensor
networks, in fact, no channel coding is considered. This implies
that links are not practically error free and the errors are to be
tackled at higher layers of protocol stack. One option is to devise
an automatic repeat request protocol at the link layer \cite{R12}.
In this work, we take a joint source-channel-network coding approach
to deal with the error at the presentation and network layers.

In a general network, network coding at the relay nodes may be
considered for improved communications performance. In \cite{R13},
it is shown that multicast capacity (maximum multicast rate) in an
error free network can be achieved by network coding. In this
setting, the intermediate nodes are allowed to decode and re-encode
their incoming information. In \cite{R14}, it is demonstrated that
linear network coding with finite alphabet size can achieve the
multicast capacity for communications over a network with error-free
links. This corresponds to a separate design of network and channel
coding, in the sense that perfect point to point channel codes
render the links error-free from a network layer perspective. In
\cite{R15}, it is shown that source-network coding separation or
channel-network coding separation is not optimal for non-multicast
networks. In \cite{R16}, it is shown that source and network coding
separation is not optimal even for general multicast networks and
that it is optimal only for networks with two sources and two
receivers. Effective joint network-channel codes are presented for
robust multicast over networks with noisy links in, e.g.,
\cite{R17}\nocite{R18,R19,R20,R21,R27}-\cite{R28}.

In this paper, taking a joint source-channel-network coding approach
for the OTN-BR, we present optimized network coding schemes at the
relay for efficient reconstruction of the source signals at the
destinations. In particular, we focus on resource constrained
wireless networks, e.g., a wireless sensor network, where it is
assumed that the mapping at the relay is memoryless and the channel
coding is practically imperfect. First, a decoding scheme is
presented for the minimum mean squared error (MMSE) reconstruction
of signals transmitted from sources over the noisy channels of an
OTN-BR at the destinations. As we shall demonstrate, in such a
setting, the average reconstruction distortion at the OTN-BR
destinations is decomposed into two parts referred to as the source
distortion and the network distortion. The former is due to source
coding and the latter primarily corresponds to the network, i.e.,
its channels, decoding and encoding (mapping) at the relay and
decoding at the destinations. The network distortion is also
influenced by the source coder output statistics. The objective is
then to design a proper network code at the relay that minimizes the
network distortion. This in turn lends itself to an efficient design
scheme based on simulated annealing (SA), which results in a
nonlinear network code at the relay. Taking insight of the resulting
mappings, we next consider a motivating formulation of a family of
structured nonlinear network codes with substantially reduced design
complexity. As a benchmark for comparisons, we also present
performance cut-set bounds for OTN-BR.

Numerical results are provided demonstrating the superior
performance of the proposed codes when compared to linear network
coding. The outcomes are particularly inspiring as they demonstrate
(i) the insufficiency of linear network coding in such wireless
networks, (ii) a constructive approach to design nonlinear network
codes and (iii) the effectiveness of the proposed joint
source-channel-network coding schemes in complexity constrained
wireless networks in comparison to the performance bounds.

The rest of this article is organized as follows. Following the
preliminaries and the description of system model in Section
\ref{SII}, Section \ref{SIII} presents a MMSE decoder for OTN-BR.
Next, in Section \ref{SIV} the distortion at the OTN-BR destinations
is analyzed. In Section \ref{SV}, a method to design an optimized
network code at the relay is presented and then a family of
structured nonlinear network codes for the OTN-BR is suggested.
Sections \ref{SVI} and \ref{SVII} present the separate source and
channel-network coding bounds and the performance evaluations and
comparisons, respectively. This article is concluded in Section
\ref{SVIII}.

\section{Preliminaries} \label{SII}
\subsection{Notation}\label{SII-A}
The following notations are used in this paper. Capital letters,
e.g., $I$, represent random variables and small letters, e.g., $i$,
represent the realizations of random variables. We replace the
probability $P(I=i)$ by $P(I)$ for simplicity. The vectors are shown
bold faced, e.g., \textit{\textbf{I}}. The sets are identified by
scripts, e.g., $\mathcal{A}$. For the set
$\mathcal{A}\subset\mathcal{B}$, $\mathcal{A}^{c}$ denotes the
complement of $\mathcal{A}$ in $\mathcal{B}$. We denote the sets of
random variables $\{U_{k}: k\in\mathcal{A}\}$,
$\{\textit{U}_{m,k}:k\in\mathcal{A}\}$ and
$\{\textit{U}_{k,m}:k\in\mathcal{A}\}$ by $U_\mathcal{A}$,
$\textit{U}_{m,\mathcal{A}}$ and $\textit{U}_{\mathcal{A},m}$,
respectively.

\subsection{OTN-BR}\label{SII-B}
In an OTN-BR, there are $N$ sources, $M$ destinations and one relay.
The sets $\mathcal{S}=\{1,...,N\}$ and $\mathcal{D}=\{1,...,M\}$,
respectively indicate the set of sources and destinations. The
transmitted signal of source $s\in\mathcal{S}$ is to be decoded at
each of the nodes in $\mathcal{D}_s$. The subset of the sources that
are to be decoded at the destination $d\in\mathcal{D}$ are denoted
by $\mathcal{S}_d$.

In Section \ref{SI}, we presented OMARC and OMC-BR as particular
instances of OTN-BR. In addition to these two networks, we also
consider another instance of OTN-BR with two sources and three
destinations, denoted by OTN-BR-(2,3) (Fig. \ref{F1}(c)). While the developments in subsequent sections pertains to arbitrary OTN-BRs, we exemplify and assess the performance of the proposed code
construction and decoding algorithms based on these three OTN-BR
networks. In an OTN-BR-(2,3), the destination $d$, $d\in\{1,2\}$ is
to decode its corresponding source $s=d$, while the destination
$d=3$ is to decode the signals transmitted from both sources. Thus,
in this network, $M=3$, $N=2$, $\mathcal{S}_{d=1}=\{1\}$,
$\mathcal{S}_{d=2}=\{2\}$, $\mathcal{S}_{d=3}=\{1,2\}$,
$\mathcal{D}_{s=1}=\{1,3\}$ and $\mathcal{D}_{s=2}=\{2,3\}$.
Furthermore, for an OMC-BR, $M=N$,
$\mathcal{S}_d=\{d\}:d\in\mathcal{D}$ and
$\mathcal{D}_s=\{s\}:s\in\mathcal{S}$, while for an OMARC, $M=1$,
$\mathcal{S}_{d=1}=\mathcal{S}$ and
$\mathcal{D}_s=\{1\}:s\in\mathcal{S}$.

\subsection{System Settings}\label{SII-C}
At the source node $s$, $s\in\mathcal{S}$ the source signal $X_s$ is
scalar quantized with rate $R_{s}$, mapped to a codeword (index) and
transmitted to the destination nodes $\mathcal{D}_s$. The number of
quantization partitions (Voronoi regions) for source node $s$ is
equal to $L_s=2^{R_{s}}$ and we refer to the partition $I_s$  of
this quantizer as $\textit{V}_{I_s}$, where $I_s\in\mathcal{I}_s$,
$\mathcal{I}_s=\{1,\dots,L_s\}$. If the signal $X_s$ belongs to
$\textit{V}_{I_s}$, the corresponding $R_{s}$ bit index $I_s$ is
transmitted over a memoryless noisy channel orthogonal to those of
other users. This is accomplished, e.g., based on a time
multiplexing scheme \cite{R22} and BPSK modulation. In this case,
when the source node $s$, $s\in\mathcal{S}$ transmits, the relay and
destination nodes are in the receive mode; subsequently, the relay
transmits and all the destination nodes are in the receive mode.
Thus, the transmitted index $I_s$, $s\in\mathcal{S}$ is received over noisy channels at
the relay and also at the corresponding destinations $\mathcal{D}_s$
as vectors $\textit{\textbf{Y}}_{r,s}$ and
$\textit{\textbf{Y}}_{\mathcal{D}_s,s}=\{\textit{\textbf{Y}}_{d,s}:d\in\mathcal{D}_s\}$
each with $R_{s}$ components, respectively.

As elaborated in Section \ref{SI}, focusing on a complexity
constrained solution; we assume memoryless mappings at the relay.
Therefore, the relay decodes the received vectors
$\textit{\textbf{Y}}_{r,s}$, $s\in\mathcal{S}$ to the indexes
$\hat{I}_s$, $s\in\mathcal{S}$ based on a maximum a posteriori (MAP)
symbol decoding rule, i.e.,
$\hat{I}_s=\underset{I_s\in\mathcal{I}_s}{\operatorname{argmax}}\;{P\left(I_s|\textit{\textbf{Y}}_{r,s}\right)}$\footnote{Note
that $|\mathcal{I}_s|=2^{R_{s}}$ and the quantizer bit-rate $R_{s}$
is limited, hence the decoding complexity is fairly small.}. The
relay then broadcasts the index $I_r=f(\hat{I}_\mathcal{S})$ with
$R_r$ bits to all destinations, where $I_r\in\mathcal{I}_r$,
$\mathcal{I}_r=\{1,\dots,L_r=2^{R_r}\}$. Note that
$f(\hat{I}_\mathcal{S})$ denotes an arbitrary (not necessarily
linear) network coding function at the relay. The transmitted relay
index $I_r$ is received at the destination node $d$,
$d\in\mathcal{D}$ as $\textit{\textbf{Y}}_{d,r}$ with $R_r$
components. Thus, the rate of the network code is defined as
$R_{NC}\overset{\triangle}{=}{R_r}/{\sum_{s\in\mathcal{S}}{R_s}}$.
%\begin{equation}\label{E24}
%R_{NC}\overset{\triangle}{=}{R_r}/{\sum_{s\in\mathcal{S}}{R_s}}
%\end{equation}

Each of the communication channels is assumed memoryless, without
intersymbol interference and described by the transition
probabilities $P(\textit{\textbf{Y}}_{j,k}|I_k)$, where
$(k,j)\in\mathcal{G}$ and
\begin{equation}\label{E1}
\mathcal{G}=\{(k,j):k\in\mathcal{S},j\in\mathcal{D}_k\}\cup\{(k,j):k=r,j\in\mathcal{D}\}\cup\{(k,j):k\in\mathcal{S},j=r\}\textrm{.}
\end{equation}
This may be motivated by the use of an interleaver or a scrambler at
the physical layer. In this paper, we aim at designing the network
code or mapping at the relay such that the average reconstruction
distortion at the destinations is minimized.

\subsection{Definitions}
The developments in subsequent sections in general concern OTN-BR in
arbitrary settings. However, the following particular scenarios are
also considered in parts.

\newtheorem{mydef}{Definition}

\begin{mydef}The ``noiseless relay channels'' corresponds to
the scenario, where in an OTN-BR the channels from sources to the
relay and also from the relay to the destinations are (almost)
noiseless (high SNR). In this case, the source-destination channels
can be noisy.
\end{mydef}

\begin{mydef}The ``noiseless relay channels and very noisy
source-destination channels'' corresponds to the scenario, where in
an OTN-BR the source-destination channels are very noisy and the
channels from sources to the relay and also from the relay to the
destinations are noiseless. When a channel is very noisy (low-end of SNR range), the output of the channel is (almost) independent of its input.
\end{mydef}

\section{MMSE Decoding at OTN-BR Destinations} \label{SIII}
At each destination, the objective is to produce the minimum mean
squared error estimation of the signals transmitted from its
corresponding sources exploiting their dependencies with the signal
received from the relay. According to the described system model, we
take a joint source-channel-network coding approach that includes
the effect of quantization as source coding. In line with the joint
source-channel coding literature, the mean
squared error (MSE) is a desired performance
(distortion) criterion \cite{R29}.
\newtheorem{proposition}{\textit{Proposition}}
\begin{proposition}\label{P1}
In an OTN-BR destination node $d\in\mathcal{D}$, given
$\textit{\textbf{Y}}_{d,\mathcal{S}_d}$ and
$\textit{\textbf{Y}}_{d,r}$, respectively the received signals from
sources $s\in\mathcal{S}_d$ and relay, the optimum MMSE
reconstruction (estimation) of the transmitted source signal $s$ is
given by:

\begin{equation}\label{E2}
\begin{split}
\hat{x}_{s,d}=g_{s}(\textit{\textbf{Y}}_{d,\mathcal{S}_d},\textit{\textbf{Y}}_{d,r})&=E\left[X_s|\textit{\textbf{Y}}_{d,\mathcal{S}_d},\textit{\textbf{Y}}_{d,r}\right]\\
         &=\frac{1}{K}\sum_{I_r}\sum_{I_{\mathcal{S}_d}}E\left[X_s|I_s\right]P(\textit{\textbf{Y}}_{d,\mathcal{S}_d}|I_{\mathcal{S}_d})P(\textit{\textbf{Y}}_{d,r}|I_r)P(I_r|I_{\mathcal{S}_d})P(I_{\mathcal{S}_d})\textrm{,}
\end{split}
\end{equation}
where
\begin{equation}\label{E3}
K=\sum_{I_r}\sum_{I_{\mathcal{S}_d}}P(\textit{\textbf{Y}}_{d,\mathcal{S}_d}|I_{\mathcal{S}_d})P(\textit{\textbf{Y}}_{d,r}|I_r)P(I_r|I_{\mathcal{S}_d})P(I_{\mathcal{S}_d})
\end{equation}
is a factor which normalizes the sum of probabilities to one.
\end{proposition}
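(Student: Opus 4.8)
The plan is to invoke the orthogonality principle — the MMSE estimate of $X_s$ from $(\textit{\textbf{Y}}_{d,\mathcal{S}_d},\textit{\textbf{Y}}_{d,r})$ is the conditional mean — and then to expand that mean over the finitely many source indices and relay indices, using the conditional independencies imposed by the OTN-BR model to collapse and to factor the resulting terms. First I would write $\hat{x}_{s,d}=E[X_s|\textit{\textbf{Y}}_{d,\mathcal{S}_d},\textit{\textbf{Y}}_{d,r}]$, since among all functions of $(\textit{\textbf{Y}}_{d,\mathcal{S}_d},\textit{\textbf{Y}}_{d,r})$ the conditional expectation is the unique minimizer of the mean squared error. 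Then, by the law of total expectation over the pair $(I_{\mathcal{S}_d},I_r)$,
\[ \hat{x}_{s,d}=\sum_{I_r}\sum_{I_{\mathcal{S}_d}} E[X_s|I_{\mathcal{S}_d},I_r,\textit{\textbf{Y}}_{d,\mathcal{S}_d},\textit{\textbf{Y}}_{d,r}]\; P(I_{\mathcal{S}_d},I_r|\textit{\textbf{Y}}_{d,\mathcal{S}_d},\textit{\textbf{Y}}_{d,r}), \]
and it remains to simplify the two factors in the summand separately.

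For the inner conditional expectation, the received vectors are outputs of memoryless channels driven only by the transmitted indices, so they carry no information about $X_s$ beyond what $(I_{\mathcal{S}_d},I_r)$ already conveys; moreover $I_r=f(\hat{I}_{\mathcal{S}})$ is a deterministic function of the relay's MAP decisions, which themselves depend on the sources only through the indices, while $I_s$ is just the index of the quantizer cell $\textit{V}_{I_s}$ containing $X_s$. Reading these relations off the induced Bayesian network, $X_s$ is conditionally independent of $(I_{\mathcal{S}_d\setminus\{s\}},I_r,\textit{\textbf{Y}}_{d,\mathcal{S}_d},\textit{\textbf{Y}}_{d,r})$ given $I_s$, so the inner expectation reduces to $E[X_s|I_s]$.

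For the posterior factor I would apply Bayes' rule and then use that the source-to-$d$ channels and the relay-to-$d$ channel are mutually orthogonal and memoryless, which makes the joint likelihood factor as $P(\textit{\textbf{Y}}_{d,\mathcal{S}_d}|I_{\mathcal{S}_d})\,P(\textit{\textbf{Y}}_{d,r}|I_r)$ and makes the prior factor as $P(I_r|I_{\mathcal{S}_d})\,P(I_{\mathcal{S}_d})$; hence
\[ P(I_{\mathcal{S}_d},I_r|\textit{\textbf{Y}}_{d,\mathcal{S}_d},\textit{\textbf{Y}}_{d,r})=\frac{1}{K}\,P(\textit{\textbf{Y}}_{d,\mathcal{S}_d}|I_{\mathcal{S}_d})\,P(\textit{\textbf{Y}}_{d,r}|I_r)\,P(I_r|I_{\mathcal{S}_d})\,P(I_{\mathcal{S}_d}), \]
with $K=P(\textit{\textbf{Y}}_{d,\mathcal{S}_d},\textit{\textbf{Y}}_{d,r})$ the normalizing constant. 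Summing the numerator over all $(I_{\mathcal{S}_d},I_r)$ recovers the expression for $K$ in (\ref{E3}), and substituting the two simplifications into the double sum yields (\ref{E2}).

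The orthogonality principle and the Bayes manipulation are routine. The step I expect to be the main obstacle is the simplification of the inner conditional expectation, because one must be precise about exactly which conditional independencies are available: $I_r$ depends on every source — including those not in $\mathcal{S}_d$ — and on the relay's decoding errors, so $P(I_r|I_{\mathcal{S}_d})$ is itself a composite channel that has already marginalized over $I_{\mathcal{S}\setminus\mathcal{S}_d}$ and over $\hat{I}_{\mathcal{S}}$; and the reduction of $E[X_s|\cdot]$ to $E[X_s|I_s]$ requires that, conditioned on $I_s$, the remaining quantization indices carry no further information about $X_s$ (an assumption on the source statistics that should be stated explicitly). Making the full factor graph of the system explicit and checking the relevant d-separations there is the cleanest way to discharge this step rigorously.
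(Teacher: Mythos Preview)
Your proposal is correct and follows essentially the same route as the paper: MMSE estimate equals conditional mean, then Bayes' rule plus the orthogonal-channel and independent-source conditional independencies to factor the posterior. The only organizational difference is that you condition on the pair $(I_{\mathcal{S}_d},I_r)$ from the outset, whereas the paper first conditions on $I_{\mathcal{S}_d}$ alone (obtaining $E[X_s|I_{\mathcal{S}_d}]=E[X_s|I_s]$ directly from source independence) and introduces $I_r$ only afterwards through the expansion $P(\textit{\textbf{Y}}_{d,r}|I_{\mathcal{S}_d})=\sum_{I_r}P(\textit{\textbf{Y}}_{d,r}|I_r)P(I_r|I_{\mathcal{S}_d})$.
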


\begin{proof}
In \eqref{E2}, $C_{I_s}\overset{\triangle}{=}{E[X_s|I_s]}$,
$s\in\mathcal{S}_d$, describes the codebook at the destination $d$
corresponding to the source $s\in\mathcal{S}_d$. We have
$P(\textit{\textbf{Y}}_{d,\mathcal{S}_d}|I_{\mathcal{S}_d})=\underset{s\in\mathcal{S}_d}\prod{P(\textit{\textbf{Y}}_{d,s}|I_s)}$,
and the terms $P(\textit{\textbf{Y}}_{d,s}|I_s)$, $s\in
\mathcal{S}_d$ and $P(\textit{\textbf{Y}}_{d,r}|I_r)$ represent the
channel transition probability for the channels from source $s$,
and relay to destination $d$, respectively. The term $P(I_r|I_{\mathcal{S}_d})$ in \eqref{E2} and
\eqref{E3} is given by
\begin{equation} \label{E4}
P(I_r|I_{\mathcal{S}_d})=\sum_{\hat{I}_\mathcal{S}}{P(I_r|\hat{I}_{\mathcal{S}})P(\hat{I}_{\mathcal{S}_d}|I_{\mathcal{S}_d})}\prod_{j\in\mathcal{S}_d^c}{P(\hat{I}_j)}\textrm{,}
\end{equation}
in which $\hat{I}_{\mathcal{S}_d}=\{\hat{I}_s:s\in\mathcal{S}_d\}$
and $\hat{I}_s$ is the decoded index (symbol) at the relay
corresponding to the index $I_s$ emitted from the source $s$. The
term $P(I_r|\hat{I}_\mathcal{S})\in\{0,1\}$ corresponds to the
mapping or the network code function $I_r=f(\hat{I}_\mathcal{S})$ at
the relay. In \eqref{E4}, we have
$P(\hat{I}_s)=\underset{I_s}\sum{P(\hat{I}_s|I_s)P(I_s)}$, where
$P(\hat{I}_s|I_s)$, $s\in\mathcal{S}$ indicates the transition
probability of the equivalent discrete source-relay channel. The
proof is completed in Appendix \ref{A1}.
\end{proof}

The value of $C_{I_s}$ and $P(I_{s})$, $s\in\mathcal{S}$ are
acquired from the source coder and stored at every destination $d$,
$d\in\mathcal{D}_s$. As the quantizers bit-rates are relatively
small, this incurs a limited complexity. Given that the channel
transition probabilities
$P(\textit{\textbf{Y}}_{d,\mathcal{S}_d}|I_{\mathcal{S}_d})$ and
$P(\textit{\textbf{Y}}_{d,r}|I_r)$, and the network code
$f(\hat{I}_\mathcal{S})$ are available at the destination $d$, the
RHS of (2) can be effectively computed.

The following two corollaries present Proposition \ref{P1} for
OTN-BR in scenarios described by Definitions 1 and 2. These may be
utilized to invoke simplified network code design procedures, as
elaborated in Sections \ref{SV} and \ref{SVII}.

\newtheorem{corollary}{{\textit{Corollary}}}
\begin{corollary}\label{C1}
In an OTN-BR with ``noiseless relay channels'', given
$\textit{\textbf{Y}}_{d,\mathcal{S}_d}$ and $I_r$ at the destination
$d\in\mathcal{D}$, the MMSE estimation of the transmitted signal
from the source $s\in\mathcal{S}_d$ is given by
\begin{equation}\label{E5}
\hat{x}_{s,d}=g_{s}^1(\textit{\textbf{Y}}_{d,\mathcal{S}_d},I_r)\overset{\triangle}=E\left[X_s|\textit{\textbf{Y}}_{d,\mathcal{S}_d},I_r\right]=\frac{1}{K}\sum_{I_\mathcal{S},f(I_\mathcal{S})=I_r}E\left[X_s|I_s\right]P(\textit{\textbf{Y}}_{d,\mathcal{S}_d}|I_{\mathcal{S}_d})P(I_{\mathcal{S}})\textrm{,}
\end{equation}
in which, $K$ is a factor which normalizes the sum of probabilities
to one.
\end{corollary}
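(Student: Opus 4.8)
The plan is to obtain \eqref{E5} directly from Proposition~\ref{P1}, by substituting into \eqref{E2}--\eqref{E4} the degenerate channel laws that the ``noiseless relay channels'' assumption imposes, rather than re-deriving the MMSE estimator from scratch. First I would invoke the noiseless relay-to-destination channels: in this regime observing the received vector $\textit{\textbf{Y}}_{d,r}$ is informationally equivalent to observing the relay index itself, i.e., $I_r$ is a sufficient statistic for $\textit{\textbf{Y}}_{d,r}$, and $P(\textit{\textbf{Y}}_{d,r}|I_r)$ acts as a $\{0,1\}$-valued indicator that is nonzero only for the transmitted value of $I_r$. Consequently the outer sum over $I_r$ in \eqref{E2} and \eqref{E3} collapses onto the single received value, and every explicit occurrence of $P(\textit{\textbf{Y}}_{d,r}|I_r)$ can be dropped; this is why \eqref{E5} conditions on $I_r$ in place of $\textit{\textbf{Y}}_{d,r}$.

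Second I would use the noiseless source-to-relay channels to simplify $P(I_r|I_{\mathcal{S}_d})$ from \eqref{E4}. Noiseless source--relay links mean $\hat{I}_s=I_s$ for all $s\in\mathcal{S}$, so the equivalent discrete source--relay channel is the identity: $P(\hat{I}_{\mathcal{S}_d}|I_{\mathcal{S}_d})=\prod_{s\in\mathcal{S}_d}\mathbf{1}[\hat{I}_s=I_s]$ and $P(\hat{I}_j)=P(I_j)$ for $j\in\mathcal{S}_d^c$. Substituting these into \eqref{E4} annihilates the sum over $\hat{I}_{\mathcal{S}_d}$ and leaves $P(I_r|I_{\mathcal{S}_d})=\sum_{I_{\mathcal{S}_d^c}}\mathbf{1}[f(I_{\mathcal{S}})=I_r]\prod_{j\in\mathcal{S}_d^c}P(I_j)$, where $I_{\mathcal{S}}$ denotes the concatenation of $I_{\mathcal{S}_d}$ and $I_{\mathcal{S}_d^c}$ and $\mathbf{1}[f(I_{\mathcal{S}})=I_r]=P(I_r|I_{\mathcal{S}})$ is the mapping indicator of the network code.

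Third I would combine the pieces: inserting the simplified $P(I_r|I_{\mathcal{S}_d})$ into \eqref{E2}, merging the sums $\sum_{I_{\mathcal{S}_d}}\sum_{I_{\mathcal{S}_d^c}}$ into a single sum $\sum_{I_{\mathcal{S}}}$, and using the independence of the source indices (the product structure already present in \eqref{E4}) so that $P(I_{\mathcal{S}_d})\prod_{j\in\mathcal{S}_d^c}P(I_j)=P(I_{\mathcal{S}})$. The indicator $\mathbf{1}[f(I_{\mathcal{S}})=I_r]$ then restricts the summation to $\{I_{\mathcal{S}}:f(I_{\mathcal{S}})=I_r\}$, which yields exactly \eqref{E5}; performing the same substitutions in \eqref{E3} gives the stated normalizing constant $K$, equivalently fixed by requiring the posterior probabilities to sum to one. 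I expect the only delicate point to be the treatment of the ``almost noiseless / high-SNR'' wording: strictly one should argue that as the relevant transition matrices approach the identity the finite sums in \eqref{E2}--\eqref{E4} converge to the degenerate expressions above, which is a routine limiting argument because all sums are finite and the summands depend continuously on the transition probabilities; alternatively one simply works in the exact noiseless limit, where the sufficient-statistic and identity-channel facts used above hold verbatim.
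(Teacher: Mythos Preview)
Your proposal is correct and follows exactly the approach the paper intends: the paper's own proof is simply the sentence ``Using Proposition~\ref{P1}, obtaining the results in Corollaries~\ref{C1} and~\ref{C2} is straightforward,'' and your three steps (collapsing the $I_r$ sum via the noiseless relay--destination channel, reducing \eqref{E4} to an indicator times $\prod_{j\in\mathcal{S}_d^c}P(I_j)$ via the noiseless source--relay channels, and merging the sums using independence of sources) are precisely the straightforward substitutions that carry \eqref{E2}--\eqref{E4} to \eqref{E5}. There is nothing missing or different; you have simply spelled out what the paper leaves implicit.
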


\begin{corollary}\label{C2}
In an OTN-BR with ``noiseless relay channels and very noisy
source-destination channels'', given $I_r$ at the destination
$d\in\mathcal{D}$, the MMSE estimation of the transmitted source
signal $s\in\mathcal{S}_d$ is given by
\begin{equation}\label{E6}
\hat{x}_{s,d}=\hat{x}_{s}=g_{s}^2(I_r)\overset{\triangle}=E\left[X_s|I_r\right]=\frac{1}{K}\sum_{I_\mathcal{S},f(I_\mathcal{S})=I_r}E\left[X_s|I_s\right]P(I_{\mathcal{S}})\textrm{,}
\end{equation}
in which, $K$ is a factor which normalizes the sum of probabilities
to one.
\end{corollary}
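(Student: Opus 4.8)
The plan is to obtain Corollary~\ref{C2} as a limiting case of Proposition~\ref{P1} (equivalently of Corollary~\ref{C1}), by specializing the channel transition probabilities to the regime of Definition~2. The starting point is the fact, established in the proof of Proposition~\ref{P1}, that the MSE-optimal reconstruction at a destination is the conditional mean of $X_s$ given everything that destination observes; so the only work is to evaluate the expansion \eqref{E2}--\eqref{E4} under the stated idealizations and to show that it collapses to $E[X_s\mid I_r]$ with the claimed closed form.

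The first step is to impose the ``noiseless relay channels'' part of Definition~2. On each source-to-relay link this makes the equivalent discrete channel noiseless, i.e. $P(\hat I_s\mid I_s)$ equals $1$ when $\hat I_s=I_s$ and $0$ otherwise, so the sum over $\hat I_\mathcal{S}$ in \eqref{E4} collapses and $P(I_r\mid I_{\mathcal{S}_d})$ becomes $\sum_{I_{\mathcal{S}_d^c}:\,f(I_\mathcal{S})=I_r}\prod_{j\in\mathcal{S}_d^c}P(I_j)$. On the relay-to-destination link, noiselessness means $P(\textit{\textbf{Y}}_{d,r}\mid I_r)$ is concentrated on the single $I_r$ compatible with $\textit{\textbf{Y}}_{d,r}$, so conditioning on $\textit{\textbf{Y}}_{d,r}$ is the same as conditioning on $I_r$. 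Substituting both facts into \eqref{E2}--\eqref{E3} and recombining the sum over $I_{\mathcal{S}_d}$ with the one over $I_{\mathcal{S}_d^c}$ into a single sum over $I_\mathcal{S}$ restricted to $f(I_\mathcal{S})=I_r$ (with the joint $P(I_\mathcal{S})$) reproduces exactly the expression in Corollary~\ref{C1}; in practice I would simply start from Corollary~\ref{C1}.

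The second step is to impose the ``very noisy source-destination channels'' part of Definition~2, which by that definition means $P(\textit{\textbf{Y}}_{d,s}\mid I_s)$ is (almost) independent of $I_s$, hence $P(\textit{\textbf{Y}}_{d,s}\mid I_s)\approx P(\textit{\textbf{Y}}_{d,s})$ for every $s\in\mathcal{S}_d$. By channel orthogonality $P(\textit{\textbf{Y}}_{d,\mathcal{S}_d}\mid I_{\mathcal{S}_d})=\prod_{s\in\mathcal{S}_d}P(\textit{\textbf{Y}}_{d,s}\mid I_s)$, which then equals the constant $\prod_{s\in\mathcal{S}_d}P(\textit{\textbf{Y}}_{d,s})$, independent of $I_{\mathcal{S}_d}$; this constant factors out of the numerator sum in \eqref{E5} and out of the normalizer $K$ and cancels. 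What remains is $\hat x_{s,d}=\frac1K\sum_{I_\mathcal{S}:\,f(I_\mathcal{S})=I_r}E[X_s\mid I_s]\,P(I_\mathcal{S})$ with $K=\sum_{I_\mathcal{S}:\,f(I_\mathcal{S})=I_r}P(I_\mathcal{S})=P(I_r)$, which no longer depends on $\textit{\textbf{Y}}_{d,\mathcal{S}_d}$ or on $d$. To identify this with $E[X_s\mid I_r]$ I would invoke the tower property, $E[X_s\mid I_r]=\sum_{I_\mathcal{S}}E[X_s\mid I_s]P(I_\mathcal{S}\mid I_r)$ (using the chain $X_s\leftrightarrow I_s\leftrightarrow I_\mathcal{S}\leftrightarrow I_r$), together with $P(I_\mathcal{S}\mid I_r)=P(I_\mathcal{S})/P(I_r)$ for $f(I_\mathcal{S})=I_r$ and $0$ otherwise, which matches the displayed formula term by term.

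The only point needing care is the two ``almost''s in Definition~2: the relay links are only almost noiseless and the source-destination links only almost uninformative, so the identities above hold exactly only in the high-SNR / low-SNR limits. I would dispatch this either by treating those limits as the defining idealization (as the phrasing of the definition invites) or by carrying the residual terms and noting they vanish uniformly over the finitely many index tuples, so that the conditional mean converges to the stated expression. Everything else is elementary manipulation of finite sums and Bayes' rule, so I anticipate no further obstacle.
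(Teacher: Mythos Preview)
Your proposal is correct and follows exactly the approach the paper indicates: the paper simply states that ``Using Proposition~\ref{P1}, obtaining the results in Corollaries~\ref{C1} and~\ref{C2} is straightforward,'' and your argument is precisely that straightforward specialization of \eqref{E2}--\eqref{E4} under the idealizations of Definition~2. Your derivation is in fact more detailed than the paper's own treatment, and the extra care about the ``almost'' limits is a welcome clarification that the paper leaves implicit.
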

Using Proposition \ref{P1}, obtaining the results in Corollaries
\ref{C1} and \ref{C2} is straightforward.

\section{Distortion at Destinations}\label{SIV}
In this section, we investigate the average reconstruction
distortion at the destinations in an OTN-BR. Specifically, we focus
on how the distortion may be decomposed into its components to facilitate network code design at the
relay. The average reconstruction distortion (MSE) at
the destinations can be expressed as follows
\begin{equation}\label{E7}
\begin{split}
D&=\frac{1}{\underset{d=1}{\overset{M}\sum}{\left|\mathcal{S}_d\right|}}\, \underset{d=1}{\overset{M}\sum}\, \sum_{s\in\mathcal{S}_d}E\left[X_s-\hat{X}_{s,d}\right]^2\\
&=\frac{1}{\underset{d=1}{\overset{M}\sum}{\left|\mathcal{S}_d\right|}}\underset{d=1}{\overset{M}\sum}\sum_{s\in\mathcal{S}_d}\sum_{I_r}\sum_{I_{\mathcal{S}_d}}
\underset{\textit{V}_{I_{\mathcal{S}_d}}}\int\,
\underset{\textit{\textbf{Y}}_{d,\mathcal{S}_d}}\int\,
\underset{\textit{\textbf{Y}}_{d,r}}\int
\left|{X_s-g_{s}\left(\textit{\textbf{Y}}_{d,\mathcal{S}_d},\textit{\textbf{Y}}_{d,r}\right)}\right|^2\\
&\times{P\left(I_r|I_{\mathcal{S}_d}\right)P\left(\textit{\textbf{Y}}_{d,\mathcal{S}_d}|I_{\mathcal{S}_d}\right)P\left(\textit{\textbf{Y}}_{d,r}|I_r\right)
P\left(X_{\mathcal{S}_d}\right)\mathrm{d}\textit{\textbf{Y}}_{d,\mathcal{S}_d}\mathrm{d}\textit{\textbf{Y}}_{d,r}\mathrm{d}X_{\mathcal{S}_d}}\textrm{,}
\end{split}
\end{equation}
in which
$\hat{X}_{s,d}=g_{s}\left(\textit{\textbf{Y}}_{d,\mathcal{S}_d},\textit{\textbf{Y}}_{d,r}\right)$,
$s\in\mathcal{S}$ is given in Proposition \ref{P1},
$\left|\mathcal{S}_d\right|$ is the number of elements in
$\mathcal{S}_d$, $\textit{V}_{I_{\mathcal{S}_d}}=\left\{
\textit{V}_{I_k}:k\in\mathcal{S}_d\right\}$ denotes (Voronoi
regions) quantization partitions $I_k$ of sources
$k\in\mathcal{S}_d$ and $P\left(I_r|I_{\mathcal{S}_d}\right)$ given
in \eqref{E4} represents the effect of source-relay channels, and
decoding and network coding at the relay.

\begin{proposition}\label{P2}
The average distortion at the OTN-BR is equal to the sum of two
terms as follows $D=D_{sources}+D_{network}$ in which
\begin{equation}\label{E8}
D_{sources}=\frac{1}{\underset{d=1}{\overset{M}\sum}{\left|\mathcal{S}_d\right|}}\,
\underset{d=1}{\overset{M}\sum}\,
\sum_{s\in\mathcal{S}_d}\sum_{I_s}\underset{\textit{V}_{I_s}}\int\left|X_s-C_{I_s}\right|^2P\left(X_s\right)\mathrm{d}X_s
\end{equation}
\vspace{-2pt} and\vspace{-2pt}
\begin{equation}\label{E9}
\begin{split}
D_{network}&=\frac{1}{\underset{d=1}{\overset{M}\sum}{\left|\mathcal{S}_d\right|}}\,
\underset{d=1}{\overset{M}\sum}\,
\sum_{s\in\mathcal{S}_d}\sum_{I_r}\sum_{I_{\mathcal{S}_d}}P\left(I_{\mathcal{S}_d}\right)
\underset{\textit{\textbf{Y}}_{d,\mathcal{S}_d}}\int\,
\underset{\textit{\textbf{Y}}_{d,r}}\int
\left|{C_{I_s}-g_{s}\left(\textit{\textbf{Y}}_{d,\mathcal{S}_d},\textit{\textbf{Y}}_{d,r}\right)}\right|^2\\
&\times{P\left(I_r|I_{\mathcal{S}_d}\right)P\left(\textit{\textbf{Y}}_{d,\mathcal{S}_d}|I_{\mathcal{S}_d}\right)P\left(\textit{\textbf{Y}}_{d,r}|I_r\right)
\mathrm{d}\textit{\textbf{Y}}_{d,\mathcal{S}_d}\mathrm{d}\textit{\textbf{Y}}_{d,r}}\textrm{,}
\end{split}
\end{equation}
where
$g_{s}\left(\textit{\textbf{Y}}_{d,\mathcal{S}_d},\textit{\textbf{Y}}_{d,r}\right)$
and $P(I_r|I_{\mathcal{S}_d})$ are given in Proposition \ref{P1} and
\eqref{E4}, respectively.
\end{proposition}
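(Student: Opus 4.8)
The plan is to establish the decomposition $D = D_{sources} + D_{network}$ by inserting the destination codeword $C_{I_s} = E[X_s \mid I_s]$ as an intermediate point in the squared error and showing that the cross term vanishes. Starting from the expression for $D$ in \eqref{E7}, I would write, for each $s \in \mathcal{S}_d$,
\[
|X_s - g_s(\cdot)|^2 = |X_s - C_{I_s}|^2 + |C_{I_s} - g_s(\cdot)|^2 + 2(X_s - C_{I_s})(C_{I_s} - g_s(\cdot)),
\]
where $g_s(\cdot) = g_s(\textit{\textbf{Y}}_{d,\mathcal{S}_d},\textit{\textbf{Y}}_{d,r})$. Substituting this into \eqref{E7} splits $D$ into three sums. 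The first sum, once the integrals over $\textit{\textbf{Y}}_{d,\mathcal{S}_d}$, $\textit{\textbf{Y}}_{d,r}$ and the sum over $I_r$ are carried out, collapses because $\int P(\textit{\textbf{Y}}_{d,\mathcal{S}_d}\mid I_{\mathcal{S}_d})\,\mathrm{d}\textit{\textbf{Y}}_{d,\mathcal{S}_d} = 1$, $\int P(\textit{\textbf{Y}}_{d,r}\mid I_r)\,\mathrm{d}\textit{\textbf{Y}}_{d,r} = 1$, and $\sum_{I_r} P(I_r \mid I_{\mathcal{S}_d}) = 1$; also summing over $I_k$ for $k \in \mathcal{S}_d \setminus \{s\}$ marginalizes $P(I_{\mathcal{S}_d})$ down to $P(I_s)$, and the integrand $|X_s - C_{I_s}|^2$ depends only on $X_s$ and $I_s$, leaving exactly $D_{sources}$ as in \eqref{E8}. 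The second sum is, term by term, already in the form of $D_{network}$ in \eqref{E9}, after the analogous marginalization over $I_k$, $k \in \mathcal{S}_d \setminus \{s\}$ inside $P(I_{\mathcal{S}_d})$ (note $C_{I_s}$ and $g_s(\cdot)$ do not depend on those indices once $\textit{\textbf{Y}}_{d,\mathcal{S}_d}$ is given, but the $\textit{\textbf{Y}}_{d,s'}$ components for $s' \neq s$ integrate to one as well).

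The crux is showing the cross term is zero. Here I would fix $d$ and $s$ and interpret the remaining expression probabilistically: it is $2\,E\big[(X_s - C_{I_s})(C_{I_s} - \hat{X}_{s,d})\big]$ where the expectation is over the joint law of $X_s$, $I_{\mathcal{S}_d}$, $I_r$, $\textit{\textbf{Y}}_{d,\mathcal{S}_d}$, $\textit{\textbf{Y}}_{d,r}$ induced by the source statistics, the source-relay channels and network code (through $P(I_r \mid I_{\mathcal{S}_d})$), and the destination channels. I would condition on $(I_{\mathcal{S}_d}, I_r, \textit{\textbf{Y}}_{d,\mathcal{S}_d}, \textit{\textbf{Y}}_{d,r})$: the factor $C_{I_s} - \hat{X}_{s,d}$ is then a constant, so the conditional expectation of the product equals $(C_{I_s} - \hat{X}_{s,d})\,E[X_s - C_{I_s} \mid I_{\mathcal{S}_d}, I_r, \textit{\textbf{Y}}_{d,\mathcal{S}_d}, \textit{\textbf{Y}}_{d,r}]$. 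The key observation is that, in the Markov structure of the system model, $X_s \to I_s \to$ (everything else): given $I_s$, the quantization index, the received vectors and the relay index carry no further information about $X_s$ (the channels act on indices, not on $X_s$ directly). Hence $E[X_s \mid I_{\mathcal{S}_d}, I_r, \textit{\textbf{Y}}_{d,\mathcal{S}_d}, \textit{\textbf{Y}}_{d,r}] = E[X_s \mid I_s] = C_{I_s}$, so the inner conditional expectation is $0$ and the whole cross term vanishes.

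I would then assemble the three pieces: first sum $= D_{sources}$, second sum $= D_{network}$, third sum $= 0$, giving $D = D_{sources} + D_{network}$. The main obstacle I anticipate is bookkeeping the conditional-independence (Markov chain $X_s \to I_s \to \textit{\textbf{Y}}_{d,\mathcal{S}_d}, \textit{\textbf{Y}}_{d,r}, I_r$) carefully enough to justify $E[X_s \mid \text{all observations}] = C_{I_s}$; once that is in place, the collapse of the integrals and the marginalization of $P(I_{\mathcal{S}_d})$ over the unused source indices are routine. This is essentially the orthogonality principle of MMSE estimation applied with $C_{I_s}$ as the coarser (index-only) conditional mean, and I would present it in that language to keep the argument short.
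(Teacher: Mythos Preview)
Your proposal is correct and follows essentially the same route as the paper: insert $C_{I_s}$, expand the square into three pieces, and show the cross term vanishes. The only cosmetic difference is that the paper dispatches the cross term more directly by observing that $\int_{V_{I_s}} (X_s - C_{I_s})\,P(X_s)\,\mathrm{d}X_s = 0$ (the centroid property of $C_{I_s}$), rather than invoking the Markov chain $X_s \to I_s \to (\cdot)$ and the orthogonality principle; these are the same fact in different clothing. One small bookkeeping point: for $D_{network}$ no marginalization over the indices $I_k$, $k\in\mathcal{S}_d\setminus\{s\}$, is needed --- the paper simply uses $\int_{V_{I_{\mathcal{S}_d}}} P(X_{\mathcal{S}_d})\,\mathrm{d}X_{\mathcal{S}_d} = P(I_{\mathcal{S}_d})$ and keeps the full sum over $I_{\mathcal{S}_d}$ as in \eqref{E9}.
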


\begin {proof}
Replacing
$\left|X_s-g_{s}\left(\textit{\textbf{Y}}_{d,\mathcal{S}_d},\textit{\textbf{Y}}_{d,r}\right)\right|$
by
$\left|X_s-C_{I_s}+C_{I_s}-g_{s}\left(\textit{\textbf{Y}}_{d,\mathcal{S}_d},\textit{\textbf{Y}}_{d,r}\right)\right|$
in \eqref{E7}, we can write $D$ as the sum of three terms, i.e.,
$D=D_{sources}+D_0+D_{network}$, in which
\begin{equation}\label{E23a}
\begin{split}
D_{sources}=&\frac{1}{\underset{d=1}{\overset{M}\sum}{\left|\mathcal{S}_d\right|}}\underset{d=1}{\overset{M}\sum}\sum_{s\in\mathcal{S}_d}\sum_{I_r}\sum_{I_{\mathcal{S}_d}}
\underset{\textit{V}_{I_{\mathcal{S}_d}}}\int\,
\underset{\textit{\textbf{Y}}_{d,\mathcal{S}_d}}\int\,
\underset{\textit{\textbf{Y}}_{d,r}}\int \left|X_s-C_{I_s}\right|^2P\left(I_r|I_{\mathcal{S}_d}\right)\\
&\times
P\left(\textit{\textbf{Y}}_{d,\mathcal{S}_d}|I_{\mathcal{S}_d}\right)P\left(\textit{\textbf{Y}}_{d,r}|I_r\right)
P\left(X_{\mathcal{S}_d}\right)\mathrm{d}\textit{\textbf{Y}}_{d,\mathcal{S}_d}\mathrm{d}\textit{\textbf{Y}}_{d,r}\mathrm{d}X_{\mathcal{S}_d}\textrm{,}
\end{split}
\end{equation}
\begin{equation}\label{E23b}
\begin{split}
D_0=&\frac{1}{\underset{d=1}{\overset{M}\sum}{\left|\mathcal{S}_d\right|}}\underset{d=1}{\overset{M}\sum}\sum_{s\in\mathcal{S}_d}\sum_{I_r}\sum_{I_{\mathcal{S}_d}}
\underset{\textit{V}_{I_{\mathcal{S}_d}}}\int\,
\underset{\textit{\textbf{Y}}_{d,\mathcal{S}_d}}\int\,
\underset{\textit{\textbf{Y}}_{d,r}}\int
2(X_s-C_{I_s})\left(C_{I_s}-g_{s}\left(\textit{\textbf{Y}}_{d,\mathcal{S}_d},\textit{\textbf{Y}}_{d,r}\right)
\right)\\
&\times{P\left(I_r|I_{\mathcal{S}_d}\right)P\left(\textit{\textbf{Y}}_{d,\mathcal{S}_d}|I_{\mathcal{S}_d}\right)P\left(\textit{\textbf{Y}}_{d,r}|I_r\right)
P\left(X_{\mathcal{S}_d}\right)\mathrm{d}\textit{\textbf{Y}}_{d,\mathcal{S}_d}\mathrm{d}\textit{\textbf{Y}}_{d,r}\mathrm{d}X_{\mathcal{S}_d}}
\end{split}
\end{equation}
and
\begin{equation}\label{E23c}
\begin{split}
D_{network}=&\frac{1}{\underset{d=1}{\overset{M}\sum}{\left|\mathcal{S}_d\right|}}\underset{d=1}{\overset{M}\sum}\sum_{s\in\mathcal{S}_d}\sum_{I_r}\sum_{I_{\mathcal{S}_d}}
\underset{\textit{V}_{I_{\mathcal{S}_d}}}\int\,
\underset{\textit{\textbf{Y}}_{d,\mathcal{S}_d}}\int\,
\underset{\textit{\textbf{Y}}_{d,r}}\int
\left|C_{I_s}-g_{s}\left(\textit{\textbf{Y}}_{d,\mathcal{S}_d},\textit{\textbf{Y}}_{d,r}\right)\right|^2\\
&\times{P\left(I_r|I_{\mathcal{S}_d}\right)P\left(\textit{\textbf{Y}}_{d,\mathcal{S}_d}|I_{\mathcal{S}_d}\right)P\left(\textit{\textbf{Y}}_{d,r}|I_r\right)
P\left(X_{\mathcal{S}_d}\right)\mathrm{d}\textit{\textbf{Y}}_{d,\mathcal{S}_d}\mathrm{d}\textit{\textbf{Y}}_{d,r}\mathrm{d}X_{\mathcal{S}_d}}\mathrm{.}
\end{split}
\end{equation}
Noting that the sources are independent and
$\underset{I_s}\sum{\underset{\textit{V}_{I_s}}\int
{P(X_s)\mathrm{d}X_s}}=1$, $D_{sources}$ in \eqref{E23a} is
simplified to \eqref{E8}. Considering
$\underset{\textit{V}_{I_s}}\int\left(X_s-C_{I_s}\right)P(X_s)\mathrm{d}X_s=0$,
it is straightforward to see that $D_0=0$. Finally, noting that
$\underset{\textit{V}_{I_{\mathcal{S}_d}}}\int{P(X_{\mathcal{S}_d})\mathrm{d}X_{\mathcal{S}_d}}=P(I_{\mathcal{S}_d})$,
we obtain $D_{network}$ as in \eqref{E9}.
\end{proof}

As expected, the term $D_{sources}$ only depends on the distortion
due to source coding (quantization) and is independent of the
network code at the relay or of the channel. On the other hand, only
the term $D_{network}$ depends on the channels and the network code
at the relay. Hence, we design the network code such that
$D_{network}$ is minimized. As evident, $D_{network}$
also depends on source coder output statistics and MMSE
decoding of the source signals at
destinations. Therefore, the design takes a joint
source-channel-network coding approach.

The following two corollaries present Proposition \ref{P2} for
OTN-BR in scenarios described by Definitions 1 and 2. These may be
utilized to invoke simplified network code design procedures, as
elaborated in Sections \ref{SV} and \ref{SVII}.
\begin{corollary}\label{C3}
In an OTN-BR with ``noiseless relay channels'', $D_{network}$ is
given by:
\begin{equation}\label{E10}
\begin{split}
D_{network}=\frac{1}{\underset{d=1}{\overset{M}\sum}{\left|\mathcal{S}_d\right|}}\,
\underset{d=1}{\overset{M}\sum}\,
\sum_{s\in\mathcal{S}_d}\sum_{I_{\mathcal{S}}}P\left(I_{\mathcal{S}}\right)
\underset{\textit{\textbf{Y}}_{d,\mathcal{S}_d}}\int{\left|C_{I_s}-g_{s}^1\left(\textit{\textbf{Y}}_{d,\mathcal{S}_d},I_r=f(I_\mathcal{S})\right)\right|}^2
{P\left(\textit{\textbf{Y}}_{d,\mathcal{S}_d}|I_{\mathcal{S}_d}\right)
\mathrm{d}\textit{\textbf{Y}}_{d,\mathcal{S}_d}}\textrm{,}
\end{split}
\end{equation}
where $g_{s}^1{(\textit{\textbf{Y}}_{d,\mathcal{S}_d},I_r)}$,
$s\in\mathcal{S}$ is given in Corollary \ref{C1}.
\end{corollary}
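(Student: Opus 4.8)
The plan is to specialize the general expression for $D_{network}$ in \eqref{E9} to the ``noiseless relay channels'' scenario of Definition 1, exactly mirroring how Corollary \ref{C1} was derived from Proposition \ref{P1}. First I would observe that under noiseless relay channels, the relay-to-destination channel transition probability $P(\textit{\textbf{Y}}_{d,r}|I_r)$ degenerates to a deterministic (Kronecker-delta) kernel, so the destination receives $I_r$ perfectly; consequently the integral $\int_{\textit{\textbf{Y}}_{d,r}} (\cdot)\, P(\textit{\textbf{Y}}_{d,r}|I_r)\,\mathrm{d}\textit{\textbf{Y}}_{d,r}$ in \eqref{E9} collapses to simply evaluating the integrand at the known value $I_r$, and $g_s(\textit{\textbf{Y}}_{d,\mathcal{S}_d},\textit{\textbf{Y}}_{d,r})$ is replaced by $g_s^1(\textit{\textbf{Y}}_{d,\mathcal{S}_d},I_r)$ from Corollary \ref{C1}.

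Next I would handle the term $P(I_r|I_{\mathcal{S}_d})$ defined in \eqref{E4}. Under noiseless source-to-relay channels, the relay decodes perfectly, i.e. $\hat I_s = I_s$ almost surely, so $P(\hat I_{\mathcal{S}_d}|I_{\mathcal{S}_d})$ becomes a delta, and $P(\hat I_j) = P(I_j)$ for $j \in \mathcal{S}_d^c$. Substituting these into \eqref{E4} and then into \eqref{E9}, the summation structure changes: the double sum $\sum_{I_r}\sum_{I_{\mathcal{S}_d}} P(I_{\mathcal{S}_d}) P(I_r|I_{\mathcal{S}_d}) (\cdots)$ turns into a single sum over all source index tuples $I_{\mathcal{S}}$ (now including the indices $I_{\mathcal{S}_d^c}$ that get marginalized back in through the $\prod_{j} P(I_j)$ factors and the independence of the sources), weighted by $P(I_{\mathcal{S}})$, with $I_r$ pinned to $f(I_{\mathcal{S}})$. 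This is precisely the $\sum_{I_{\mathcal{S}}} P(I_{\mathcal{S}}) (\cdots)\big|_{I_r = f(I_{\mathcal{S}})}$ structure appearing in \eqref{E10}; note there is no longer a constraint of the form $f(I_{\mathcal{S}}) = I_r$ because $I_r$ is simply substituted out.

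Finally I would collect the surviving factors: only $P(\textit{\textbf{Y}}_{d,\mathcal{S}_d}|I_{\mathcal{S}_d})$ remains as a channel kernel to be integrated against over $\textit{\textbf{Y}}_{d,\mathcal{S}_d}$, yielding \eqref{E10}. Since all the structural manipulations are the $D_{network}$-analogue of what was already done for the MMSE estimator, and the problem statement already grants that ``obtaining the results in Corollaries \ref{C1} and \ref{C2} is straightforward'', the argument is essentially bookkeeping. The only point requiring a little care — and the main obstacle, such as it is — is making precise the passage from the pair $(\sum_{I_{\mathcal{S}_d}}, P(I_r|I_{\mathcal{S}_d}))$ to the single sum $\sum_{I_{\mathcal{S}}}$ with $I_r = f(I_{\mathcal{S}})$: one must verify that summing \eqref{E4} over the complement indices with their prior weights, together with source independence $P(I_{\mathcal{S}}) = \prod_s P(I_s)$, genuinely reproduces the expression in \eqref{E10} with no leftover normalization. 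Once that identification is checked, the corollary follows immediately.
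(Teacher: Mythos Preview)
Your proposal is correct and follows precisely the approach the paper intends: the paper gives no explicit proof of Corollary~\ref{C3}, stating only that Corollaries~\ref{C3} and~\ref{C4} ``present Proposition~\ref{P2} for OTN-BR in scenarios described by Definitions~1 and~2'' (mirroring the remark that Corollaries~\ref{C1} and~\ref{C2} are ``straightforward'' from Proposition~\ref{P1}). Your specialization---collapsing the $\textit{\textbf{Y}}_{d,r}$-integral via the degenerate relay-to-destination kernel, collapsing $P(\hat I_{\mathcal{S}_d}|I_{\mathcal{S}_d})$ to a delta, and then using source independence to merge $\sum_{I_r}\sum_{I_{\mathcal{S}_d}}P(I_{\mathcal{S}_d})P(I_r|I_{\mathcal{S}_d})$ into $\sum_{I_{\mathcal{S}}}P(I_{\mathcal{S}})$ with $I_r=f(I_{\mathcal{S}})$---is exactly the intended bookkeeping.
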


\begin{corollary}\label{C4}
In an OTN-BR with ``noiseless relay channels and very noisy
source-destination channels'', $D_{network}$ is given by:
\begin{equation}\label{E11}
\begin{split}
D_{network}&=\frac{1}{\underset{d=1}{\overset{M}\sum}{\left|\mathcal{S}_d\right|}}\,
\underset{d=1}{\overset{M}\sum}\,
\sum_{s\in\mathcal{S}_d}\sum_{I_{\mathcal{S}}}P\left(I_{\mathcal{S}}\right)
{\left|C_{I_s}-g_{s}^2\left(I_r=f(I_\mathcal{S}\right))\right|}^2\textrm{,}
\end{split}
\end{equation}
where $g_{s}^2{(I_r)}$, $s\in\mathcal{S}$ is given in Corollary
\ref{C2}.
\end{corollary}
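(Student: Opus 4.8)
The plan is to obtain \eqref{E11} as one further specialization of Corollary \ref{C3}: since the scenario ``noiseless relay channels and very noisy source-destination channels'' already contains the hypothesis of Corollary \ref{C3}, I would start from \eqref{E10}, with the destination estimator $g_{s}^1(\textit{\textbf{Y}}_{d,\mathcal{S}_d},I_r)$ of Corollary \ref{C1}, and then impose the very noisy source-destination assumption of Definition 2.

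First I would use the defining property of a very noisy channel: for every $d$ and every $s\in\mathcal{S}_d$ the law $P(\textit{\textbf{Y}}_{d,s}|I_s)$ becomes independent of $I_s$, so that $P(\textit{\textbf{Y}}_{d,\mathcal{S}_d}|I_{\mathcal{S}_d})=\prod_{s\in\mathcal{S}_d}P(\textit{\textbf{Y}}_{d,s}|I_s)$ reduces to an input-independent density $P(\textit{\textbf{Y}}_{d,\mathcal{S}_d})$. Substituting this into \eqref{E5}, the factor $P(\textit{\textbf{Y}}_{d,\mathcal{S}_d}|I_{\mathcal{S}_d})$ is common to every term of the sum over $\{I_\mathcal{S}:f(I_\mathcal{S})=I_r\}$ in the numerator and also appears in the normalizing constant $K$, hence it cancels; the estimator collapses to $\frac{1}{P(I_r)}\sum_{I_\mathcal{S}:f(I_\mathcal{S})=I_r}E[X_s|I_s]P(I_\mathcal{S})=E[X_s|I_r]=g_{s}^2(I_r)$, which is exactly the estimator of Corollary \ref{C2}. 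Consequently the integrand $\left|C_{I_s}-g_{s}^1(\textit{\textbf{Y}}_{d,\mathcal{S}_d},f(I_\mathcal{S}))\right|^2$ in \eqref{E10} no longer depends on $\textit{\textbf{Y}}_{d,\mathcal{S}_d}$ and equals $\left|C_{I_s}-g_{s}^2(f(I_\mathcal{S}))\right|^2$. The inner integral then reduces to $\int P(\textit{\textbf{Y}}_{d,\mathcal{S}_d})\,\mathrm{d}\textit{\textbf{Y}}_{d,\mathcal{S}_d}=1$, and carrying this through the outer sums over $d$, $s\in\mathcal{S}_d$ and $I_\mathcal{S}$ together with the weight $P(I_\mathcal{S})$ and the normalization $1/\sum_{d}|\mathcal{S}_d|$ yields \eqref{E11}.

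The only delicate point, and the one I expect to require care, is the meaning of the ``(almost) independent of its input'' qualifier: strictly, the identities $P(\textit{\textbf{Y}}_{d,\mathcal{S}_d}|I_{\mathcal{S}_d})=P(\textit{\textbf{Y}}_{d,\mathcal{S}_d})$ and $g_{s}^1=g_{s}^2$ hold only in the limit of vanishing source-destination SNR. I would therefore either state the corollary as an exact identity in that limit, or, for a finite-SNR argument, bound $|g_{s}^1-g_{s}^2|$ in terms of the total variation distance between $P(\textit{\textbf{Y}}_{d,\mathcal{S}_d}|I_{\mathcal{S}_d})$ and its input-independent limiting law and invoke dominated convergence to pass to the limit inside the integral in \eqref{E10}. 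Everything else is the same bookkeeping already carried out in the proofs of Proposition \ref{P2} and Corollary \ref{C3}, so I anticipate no further obstacle.
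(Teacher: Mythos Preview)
Your proposal is correct and follows exactly the route the paper intends: the paper does not spell out a separate proof of Corollary~\ref{C4} but presents it as the direct specialization of Proposition~\ref{P2} (equivalently of Corollary~\ref{C3}) under Definition~2, which is precisely what you do by letting $P(\textit{\textbf{Y}}_{d,\mathcal{S}_d}|I_{\mathcal{S}_d})$ become input-independent so that $g_s^1$ collapses to $g_s^2$ and the $\textit{\textbf{Y}}_{d,\mathcal{S}_d}$-integral in \eqref{E10} disappears. Your remark on the limiting interpretation of ``(almost) independent'' is more careful than the paper itself, which simply works at the idealized endpoint.
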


\section{Code Design at the Relay: Implementation and Complexity Considerations}\label{SV}
As described in Section \ref{SII-C}, the relay MAP decodes the
received signals $\textit{\textbf{Y}}_{r,s}$, $s\in\mathcal{S}$ to
the indexes $\hat{I}_{\mathcal{S}}=\{\hat{I}_s: s\in\mathcal{S}\}$
and then broadcasts the network coded index
$I_r=f(\hat{I}_{\mathcal{S}})$ to the destinations. At the relay,
the network code $f(\hat{I}_{\mathcal{S}})$ is designed such that
the average reconstructed signal distortion at the destinations is
minimized. Since only $D_{network}$ depends on the network code at
the relay, the goal is to minimize $D_{network}$ as defined in
\eqref{E9}, \eqref{E10} or \eqref{E11}. To this end, an exhaustive
search to identify the optimal code at the relay requires
$2^{R_{1}}\times\ldots\times2^{R_{N}}\times2^{R_{r}}$ tests of all
combinations. Thus in the following, targeting an efficient
solution, we first devise an approach based on simulated annealing
and then inspired by the results attempt to formulate a structured
network code.

\subsection{Network Code Design at the Relay: An Approach based on Simulated Annealing}\label{SV-A}
Simulated annealing is an iterative algorithm that belongs to the
class of randomized algorithms for solving combinatorial
optimization problems. Given the current state (here network code),
the next candidate state in SA is created with certain level of
randomness, based on a so-called perturbation scheme. To avoid
sticking in local minima, a candidate state with higher cost may
also be probabilistically selected as the new state \cite{R23}. The
SA converges in probability to a global minimum if a proper
perturbation scheme and a suitably slow cooling schedule are used
\cite{R23}. The two govern the possible improvements to the code in
each iteration and the number of iterations, respectively. In
particular, if the initial temperature $T_0$ is sufficiently large,
a cooling schedule described by $T_k={c}/{\log(k+1)}$, guarantees
such a convergence \cite{R23}, where $c$ is a positive constant and
$T_k$ is the temperature after $k$ temperature drops. The SA is
previously used for index assignment in robust source coding
\cite{R23} and also for designing source and channel codes in
\cite{R24}. An alternative binary switching scheme is used in
\cite{R26} for source-optimized channel coding in point-to-point
digital transmission.

We consider a $2^{R_{1}}\times\ldots\times2^{R_{N}}$ lookup-table
(codebook) at the relay in which each element indicates the
mapping of codewords (indexes)
$\left(\hat{I}_1,\hat{I}_2,\ldots,\hat{I}_N\right)$ received from
the sources to an index at the output of the relay. The proposed
SA-based algorithm to design this code at the relay is described
below.
\begin{enumerate}
\item
 Set an initial appropriate high temperature $T=T_0$.
\item
If this is the first iteration, initialize the lookup-table
randomly. Otherwise, generate a test code by perturbing the current
one. Perturbation is accomplished by changing the value of a certain
table element to that of one random element among its adjacent
neighbors. The element for perturbation is chosen in order, e.g.,
row-wise, in subsequent iterations.
\item
Calculate $D_{network}$ using Proposition \ref{P2} (resp.
Corollary \ref{C3} or \ref{C4}). Compute the change in $D_{network}$
in comparison to that in previous iteration ($\bigtriangleup{D}$).
\item
If $\bigtriangleup{D}<0$, then the perturbed code is adopted.
Otherwise, it is only chosen with probability
$\exp(\frac{-\bigtriangleup{D}}{T})$.
\item
Iterate by going to step 2, until the code is updated for a
sufficient number of times or a maximum number of iterations is
reached.
\item
Lower the temperature. If the temperature is below a specified
value or the relative change in $D_{network}$ is insignificant,
stop; otherwise go to step 2. The cooling schedule adopted here is
$T_k=\alpha{T_{k-1}}$, $0<\alpha<1$, as in \cite{R23}\cite{R24},
which allows for a faster design process.
\end{enumerate}
When the algorithm terminates, the code design process is completed
and the resulting code may be used for operations. Thus, the main
complexity remains at the design procedure for the network code at
the relay.

\newtheorem{remark}{Remark}

\begin{remark}
To design nonlinear network codes as described, particular
assumptions on channel conditions may be made to simplify the design
procedure. This is incorporated in step 3 of the algorithm, where
Corollary \ref{C3} or \ref{C4} may be utilized instead of
Proposition \ref{P2}. This provides performance versus design
complexity trade-offs, as assessed in various channel conditions in
Section \ref{SVII}.
\end{remark}

\begin{remark}
Although, the proposed network code design algorithm is devised for a general OTN-BR,
where soft outputs may also be available at the destinations, to speed up
the design procedure, one could consider equivalent hard-decided
received signals. Naturally, in this case the integrals in
\eqref{E9}, \eqref{E10} or \eqref{E11} are replaced with summations.
\end{remark}

\newtheorem{example}{Example}

\begin{example}
As an example of the network code design, we here consider an OTN-BR with
two sources, when the source signals are Gaussian and
$R_{1}=R_{2}=R$. For $R_r=R=3$, the network code is an $8\times8$
lookup-table, where one of 8 codewords (indexed 0 to 7) is assigned
to each of its elements. One such code obtained via the proposed
SA-based scheme is presented in Fig. \ref{F2}(a). This code is
clearly nonlinear as for example, for $[I_1,I_2]=[1,1]$ and
$[\acute{I}_1,\acute{I}_2]=[2,4]$, $f(I_1,I_2)=6$ and
$f(\acute{I}_1,\acute{I}_2)=0$, while
$f(I_1\oplus\acute{I_1},I_2\oplus\acute{I_2})=0$.
Of course, the network code $f$ is linear if and only if $\forall
I_j,\acute{I}_j\in\mathcal{I}_j$, $0\leq j\leq N-1$,
$f(I_1,\ldots,I_N)\oplus
f(\acute{I_1},\ldots,\acute{I_N})=f(I_1\oplus\acute{I_1},\ldots,I_N\oplus\acute{I_N})$,
in which $\oplus$ denotes summation in $GF(2^{R_r})$. %We shall
%further elaborate on the proposed codes in Section \ref{SVII}.
\end{example}

\subsection{Structured Network Code} \label{SV-B}
As stated, our experiments demonstrate that the network codes
obtained for the relay based on the proposed approach in Section
\ref{SV-A} are nonlinear. Motivated by these results, here we
attempt to formulate a structured nonlinear network code. This
provides a closed form expression of the network code at relay, and
as we shall demonstrate substantially reduces the design complexity.
It is noteworthy that there is only very limited reports in the
literature on the theory of nonlinear (channel) codes, e.g.,
\cite{R25}.

Consider the proposed network code in Fig. 2(a) designed for an
OTN-BR with two sources and $R_r=R=3$, $R_{NC}=\frac{1}{2}$. As evident, there are regions
or clusters with the same codeword which partition the codebook. Our
research indicate that the performance of the code is primarily
affected by the partitioning and not by the exact codeword (index)
assigned to each partition. Based on these observations, in Fig.
\ref{F2}(b) for the two source OTN-BR, a structured network code is
proposed based on a partitioning of the network codebook
(lookup-table at the relay). As a possible extension to the case of
OTN-BR with $N$ sources with the same network coding rate, we consider the setting where the mapping
of the first dimension and any other dimension follows the same
partitioning as that in Fig. \ref{F2}(b). The resulting family of
structured network codes $f(\hat{I}_1,\ldots,\hat{I}_N)$ may be
formulated as follows. %The simulation results reported in Section \ref{SVII}
%assess the performance and validate the effectiveness of the
%presented network code structure.
\begin{equation}\label{E12}
f\left(\hat{I}_1,\ldots,\hat{I}_s,...,\hat{I}_N\right)=
\begin{cases}
a_{k_2,\ldots,k_N}&  0\leq \hat{I}_1<2 \;,\;
k_{s\neq1}=\hat{I}_s\circ
2^{R_{s}-1}:k_{s\neq1}\in\{0,1\}\\
b_{k_1,k_2,\ldots,k_N}&  2\leq \hat{I}_1<2^{R_{1}}-2,\;
k_1=\left\lfloor\frac{\hat{I}_1-2}{4}\right\rfloor,\;k_{s\neq1}=\left\lfloor
\frac{\hat{I}_s}{2^{R_{s}-2}}\right\rfloor:\\
&\;k_1\in\{0,\ldots,2^{R_{1}-2}-1\}\;,\;k_{s\neq1}\in\{0,\ldots,3\}\textrm{,}\\
e_{k_2,\ldots,k_N}&  2^{R_{1}}-2\leq \hat{I}_1<2^{R_{1}},\;
k_{s\neq1}=\hat{I}_s\circ2^{R_{s}-1}:k_{s\neq1}\in\{0,1\}
\end{cases}
\end{equation}
where, the operation $\circ$ is defined as
$\hat{I}_s\circ2^{R_{s}-1}=
\begin{cases}
0 &\text{if}\quad 0\leq \hat{I}_s<2^{R_{s}-1} \\
1&\text{otherwise}\textrm{.}
\end{cases}$. In \eqref{E12}, $a_{k_2,\ldots,k_N}$, $b_{k_1,k_2,\ldots,k_N}$,
$e_{k_2,\ldots,k_N}\in\mathcal{I}_r$ are $2^{R_r}$ distinct indices.
As the partitioning is now fixed according to \eqref{E12}, the
design is now simplified to assigning the values of these indices,
which may be handled by a SA-based algorithm with substantially
smaller complexity. Our experiments (reported in part in Section
\ref{SVII}) reveal the efficiency of the proposed code structure and
only a small performance gap with the (unstructured) code produced
by the presented algorithm in Section \ref{SV-A}.
Noting the definition of network coding rate $R_{NC}$ and utilizing
\eqref{E12} for $R_s=R$ $\forall s\in\mathcal{S}$, we have
$R_{NC}={\left\lceil\log_2\left(4\times
2^{N-2}+\left(2^{R}-4\right)\times
4^{N-2}\right)\right\rceil}/{NR}$, which as desired is nearly
$\frac{1}{2}$ for $R\in\{2,3,4\}$ and $N<5$.

\section{Separate Source Channel-Network Coding Bound}\label{SVI}
In this section, we present performance bounds by combining the
rate-distortion function due to source coding with the capacity
upper-bound, due to the cut-set bound for wireless networks under
consideration. The latter corresponds to joint channel-network
coding. Therefore, we refer to the bounds thus derived as the
separate source and channel-network coding bound, which is naturally
obtained assuming large block lengths.

In the following, we first present a cut-set upper bound on the
achievable rates $R_{c,s}$ (bits per channel use - due to joint
channel-network coding) in a Gaussian OTN-BR for communications
between the source node $s$ and the corresponding destination nodes,
$\mathcal{D}_s$. In a Gaussian OTN-BR, each link is modeled as an independent additive white
Gaussian noise channel, where the noise is zero mean with variance
$\sigma^2_{n,m},\forall(n,m)\in\mathcal{G}$ (refer to \eqref{E1} for
the definition of $\mathcal{G}$). Considering orthogonality of channels based on time
multiplexing, the sources and the relay each transmit in a fraction
of time $T$, i.e., $\lambda_1T,\ldots,\lambda_NT$ and
$\bigl(1-\underset{s\in\mathcal{S}}\sum{\lambda_s}\bigr)T$,
respectively. Suppose that the sources and the relay are subject to the
average power constraints $P_1,\ldots, P_N$ and $P_r$. Thus, these
nodes can transmit respectively with the average powers
$\acute{P_1}={P_1}/{\lambda_1},\ldots,
\acute{P_N}={P_N}/{\lambda_N}$ and
$\acute{P_r}={P_r}/{\bigl(1-\underset{s\in\mathcal{S}}\sum{\lambda_s}\bigr)}$
during their corresponding transmission periods.

\begin{proposition}\label{P3}
For a Gaussian OTN-BR with $N$ source and $M$ destination nodes,
the cut-set upper bound for the rates
$(R_{c,1},\ldots,R_{c,s},\ldots,R_{c,N})$ is given by
\begin{equation}\label{E13}
\begin{split}
&\underset{\substack{s\in\mathcal{F}:\\\mathcal{F}\subset\mathcal{S}}}\sum{R_{c,s}}\leq
\underset{\substack{0\leq\lambda_{\acute{s}}\leq1\\{\acute{s}\in\mathcal{S}}
}}{\operatorname{max}}\;\operatorname{min}
\left\{\underset{\acute{s}\in\mathcal{F}}\sum{\lambda_{\acute{s}}}
\frac{1}{2}\log_2\left(1+\hspace{-2pt}\hspace{-2pt}\underset{d\in\{\mathcal{D}_{\acute{s}}\}}\sum{\hspace{-2pt}\frac{\acute{P}_{\acute{s}}}{\sigma^2_{\acute{s},d}}}+\frac{\acute{P}_{\acute{s}}}{\sigma^2_{\acute{s},r}}\right),\underset{\acute{s}\in\mathcal{F}}\sum{\lambda_{\acute{s}}}\frac{1}{2}\log_2\left(1+\hspace{-2pt}\underset{d\in\{\mathcal{A}\cap\mathcal{D}_{\acute{s}}\}}
\sum{\frac{{\acute{P}_{\acute{s}}}}{\sigma^2_{{\acute{s}},d}}}\right)\right.\\
&\left.+\left(1-\underset{{\acute{s}}\in\mathcal{S}}\sum{\lambda_{\acute{s}}}\right)
\frac{1}{2}\log_2\left(1+\underset{d\in\mathcal{A}}\sum{\frac{\acute{P_r}}{\sigma^2_{r,d}}}\right):
\forall\mathcal{A}\subset\mathcal{D_\mathcal{F}},\;\mathcal{S}_{\mathcal{A}}\cap\mathcal{F}=\mathcal{F}\right\}
\end{split}
\end{equation}
\end{proposition}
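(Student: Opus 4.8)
The plan is to derive \eqref{E13} as a standard application of the cut-set (max-flow min-cut) bound for multiterminal networks, specialized to the orthogonal time-multiplexed Gaussian structure of the OTN-BR. First I would fix an arbitrary subset $\mathcal{F}\subset\mathcal{S}$ of sources and recall that, for reliable communication of the messages $\{W_s: s\in\mathcal{F}\}$ to their intended destinations $\mathcal{D}_{\mathcal{F}}$, every cut separating all the sources in $\mathcal{F}$ from a set of destinations that jointly decode those messages must have conditional mutual information rate at least $\sum_{s\in\mathcal{F}}R_{c,s}$. Concretely, for any $\mathcal{A}\subset\mathcal{D}_{\mathcal{F}}$ with $\mathcal{S}_{\mathcal{A}}\cap\mathcal{F}=\mathcal{F}$ (so that the destinations in $\mathcal{A}$ collectively need to decode all of $\mathcal{F}$), the network node set is partitioned into a source side $\Omega$ (containing the $\mathcal{F}$-sources, and optionally the relay) and a sink side $\Omega^c$ (containing $\mathcal{A}$), and the cut-set bound gives $\sum_{s\in\mathcal{F}}R_{c,s}\le I(X_{\Omega};Y_{\Omega^c}\mid X_{\Omega^c})$ for some joint input distribution. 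The two terms inside the $\min$ in \eqref{E13} correspond exactly to the two ways of placing the relay: on the source side (the first term, which then counts the source-to-relay links $\acute P_{\acute s}/\sigma^2_{\acute s,r}$ together with all source-to-destination links) and on the sink side (the second plus third terms, where the source-to-destination links crossing into $\mathcal{A}$ and the relay-to-$\mathcal{A}$ links are counted).

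The key steps, in order, are: (i) write down the cut-set bound $\sum_{s\in\mathcal{F}}R_{c,s}\le \min_{\Omega} \max_{p(x_{\Omega})} I(X_{\Omega};Y_{\Omega^c}\mid X_{\Omega^c})$ over the two relevant cuts and over all $\mathcal{A}$ as above; (ii) exploit orthogonality — since the $N+1$ channels (one per source, one for the relay) do not interfere and are used in disjoint time fractions $\lambda_1,\dots,\lambda_N$ and $1-\sum_s\lambda_s$, the mutual information across a cut decomposes additively over the active links, each weighted by its time fraction, so the joint maximization reduces to independent per-link Gaussian maximizations; (iii) for each Gaussian link with per-use power constraint $\acute P$ and noise variance $\sigma^2$, use $\max I(X;Y)=\tfrac12\log_2(1+\acute P/\sigma^2)$ with a Gaussian input, and for a node that broadcasts or is heard by several receivers on an orthogonal channel, note that the relevant quantity is the sum of the per-receiver capacities $\sum_d \tfrac12\log_2(1+\acute P/\sigma^2_{\cdot,d})$ (equivalently, a single Gaussian input simultaneously achieves each receiver's point-to-point bound); (iv) substitute $\acute P_s=P_s/\lambda_s$ and $\acute P_r=P_r/(1-\sum_s\lambda_s)$, collect the time-fraction weights, and finally take the outer $\max$ over the feasible $\lambda_{\acute s}\in[0,1]$ and the $\min$ over the cut choice and over admissible $\mathcal{A}$, yielding precisely \eqref{E13}.

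The main obstacle, and the step deserving the most care, is \textbf{the bookkeeping of which links cross each cut for each choice of $\mathcal{A}$ and each placement of the relay}, together with justifying that the orthogonal structure makes the cut mutual information split as a weighted sum of single-link terms with Gaussian-optimal inputs. In particular one must check: that when the relay is on the sink side, the source-to-relay links do \emph{not} cross the cut (hence the absence of a $1/\sigma^2_{\acute s,r}$ term in the second branch), while the source-to-destination links into $\mathcal{A}$ and the relay-to-$\mathcal{A}$ links do; that the condition $\mathcal{S}_{\mathcal{A}}\cap\mathcal{F}=\mathcal{F}$ is exactly what guarantees $\mathcal{A}$ is a valid decoding set for all of $\mathcal{F}$; and that independence of the noise across links plus absence of interference lets each term be maximized separately by an i.i.d.\ Gaussian codebook of the appropriate power. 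Once this combinatorial/structural accounting is in place, the Gaussian evaluations are routine and the inequality follows; I would present the two relay placements as the two arguments of the $\min$, and relegate the verbatim link enumeration to a short lemma or to the appendix.
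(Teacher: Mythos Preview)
Your overall strategy matches the paper's: apply the max-flow min-cut bound, use the orthogonal time-multiplexed structure to split the cut mutual information into per-slot terms, and then evaluate each term with a Gaussian input. However, two concrete steps in your plan are stated incorrectly and would not reproduce \eqref{E13}.

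First, you have the relay placement inverted. The first term inside the $\min$ (the one containing $\acute P_{\acute s}/\sigma^2_{\acute s,r}$) corresponds to the cut $\Omega=\mathcal{F}$, i.e., the relay is on the \emph{sink} side $\Omega^c$ together with all of $\mathcal{D}_{\mathcal{F}}$; that is precisely why the source-to-relay links cross and contribute. Conversely, the second and third terms correspond to $\Omega=\mathcal{F}\cup\{r\}$, so the relay sits on the \emph{source} side; then source-to-relay links are internal (hence absent), while relay-to-$\mathcal{A}$ links cross. Your paragraph says the opposite in both places, and your later check (``when the relay is on the sink side, the source-to-relay links do not cross'') is backwards: a link crosses exactly when its endpoints lie on opposite sides of the partition.

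Second, your step (iii) uses the wrong single-input multiple-output expression. When one transmitter with power $\acute P$ is heard through independent Gaussian noises $\sigma^2_{\cdot,d}$ by several receivers on the far side of the cut, the relevant mutual information is $I(X;Y_{\mathcal{B}})=\tfrac12\log_2\bigl(1+\sum_{d\in\mathcal{B}}\acute P/\sigma^2_{\cdot,d}\bigr)$ (a SIMO/MRC evaluation), not the sum $\sum_{d}\tfrac12\log_2(1+\acute P/\sigma^2_{\cdot,d})$ you wrote. The bound in \eqref{E13} is a log of a sum of SNRs, not a sum of logs; using your formula would yield a strictly different (and in general looser) expression. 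This is exactly the identity the paper invokes in its appendix, and it is what makes the three arguments of the $\min$ come out in the stated form.
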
\vspace{-2pt}
\begin{proof}
We consider $\mathcal{F}$ as a subset of sources, i.e.,
$\mathcal{F}\subset \mathcal{S}$. A set of destinations, whose every member intends to decode all the sources in $\mathcal{F}$ is
denoted by $\mathcal{A}$, $\mathcal{A}\subset \mathcal{D}_\mathcal{F}$. The cut $C_1$, is considered as that crossing only the outgoing channels from $\mathcal{F}$. On the other hand, the cut crossing the incoming channels to $\mathcal{A}$ and also including only the outgoing channels from the relay and $\mathcal{F}$ is denoted by $C_2$. Using max-flow min-cut theorem for the OTN-BR, we see that the maximum transmission sum-rate by the sources in $\mathcal{F}$ is equal to the minimum of information flow across the cut $C_1$ or the cuts $C_2$ corresponding to all possible subsets $\mathcal{A}$. As evident in RHS of \eqref{E13},
the first term corresponds to the cut $C_1$ and the next term is
related to the cuts $C_2$. The complete proof of Propositions \ref{P3}
is provided in Appendix \ref{A2}.
\end{proof}

For a Gaussian source with variance $\sigma_s^2$ and a source coding
rate of $R_s$ bits per source sample, the distortion-rate function
$D(R_s)$ is equal to $\sigma^2_s 2^{-2R_s}$ \cite{R6}. Considering
separate source and channel-network coding, we have
\begin{equation}\label{E25}
D(bR_{c,s})=\sigma^2_s 2^{-2bR_{c,s}},
\end{equation}
where $b$ denotes the number of channel uses per source sample.
Thus, equation \eqref{E25} in conjunction with Proposition 3,
presents a distortion-power function that serves as a performance
bound in the sequel. \vspace{-2pt}

We now consider Proposition \ref{P3} in a special case for a Gaussian OTN-BR-(2,3) in a
symmetric network setting. A Gaussian OTN-BR is symmetric
when $\sigma^2_{s,d}=\sigma^2_{\mathscr{s}\mathscr{d}}$,
$\sigma^2_{s,r}=\sigma^2_{\mathscr{s}\mathscr{r}}$ ,
$\sigma^2_{r,d}=\sigma^2_{\mathscr{r}{\mathscr{d}}}$ and $P_s=P$,
$s\in\mathcal{S},\; d\in\mathcal{D}$. We thus have $R_{c,s}=R_c$ $\forall
s\in\mathcal{S}$.
\begin{corollary}\label{C5}
For a symmetric Gaussian OTN-BR-(2,3), the cut-set upper bound on
the rates $R_{c,s}=R_c,\;s\in\mathcal{S}$ is given by \vspace{-2pt}
\begin{equation*}\label{E14}
\begin{split}
R_c\leq\underset{0\leq\lambda\leq\frac{1}{2}}{\operatorname{max}}\;\operatorname{min}&\left\{\lambda\frac{1}{2}\log_2\left(1+\frac{2P}{\lambda\sigma^2_{\mathscr{s}\mathscr{d}}}+\frac{P}{\lambda\sigma^2_{\mathscr{s}\mathscr{r}}}\right),\right.
%&\;\;\left.\lambda\frac{1}{2}\log_2\left(1+\frac{P}{\lambda\sigma^2_{\mathscr{s}\mathscr{d}}}\right)+(1-2\lambda)\frac{1}{4}\log_2\left(1+\frac{P_r}{(1-2\lambda)\sigma^2_{\mathscr{r}\mathscr{d}}}\right)\right\}\textrm{.}
\end{split}
\end{equation*}
\begin{equation}\label{E14}
\begin{split}
\left.\lambda\frac{1}{2}\log_2\left(1+\frac{P}{\lambda\sigma^2_{\mathscr{s}\mathscr{d}}}\right)+(1-2\lambda)\frac{1}{4}\log_2\left(1+\frac{P_r}{(1-2\lambda)\sigma^2_{\mathscr{r}\mathscr{d}}}\right)\right\}\textrm{.}
\end{split}
\end{equation}
\end{corollary}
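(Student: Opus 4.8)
The plan is to specialize Proposition~\ref{P3} to the OTN-BR-(2,3) network under the symmetry assumptions and then simplify. First I would recall from Section~\ref{SII-B} the structure of this network: $\mathcal{S}=\{1,2\}$, $\mathcal{D}=\{1,2,3\}$, with $\mathcal{S}_1=\{1\}$, $\mathcal{S}_2=\{2\}$, $\mathcal{S}_3=\{1,2\}$, $\mathcal{D}_1=\{1,3\}$, $\mathcal{D}_2=\{2,3\}$. By symmetry $\lambda_1=\lambda_2=\lambda$ with $0\le\lambda\le\tfrac12$, and $R_{c,1}=R_{c,2}=R_c$, so the only nontrivial inequality in \eqref{E13} is the one for $\mathcal{F}=\mathcal{S}=\{1,2\}$ (the singleton subsets $\mathcal{F}=\{1\}$ or $\{2\}$ yield weaker constraints, which I would note and discard). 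So I set $\mathcal{F}=\{1,2\}$, $\mathcal{D}_{\mathcal{F}}=\{1,2,3\}$.

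Next I would enumerate the admissible sets $\mathcal{A}\subset\mathcal{D}_{\mathcal{F}}$ with $\mathcal{S}_{\mathcal{A}}\cap\mathcal{F}=\mathcal{F}$, i.e., every source in $\mathcal{F}$ is decoded by some destination in $\mathcal{A}$. Since destination~$3$ alone decodes both sources, any $\mathcal{A}\ni 3$ works; otherwise one needs both $1\in\mathcal{A}$ and $2\in\mathcal{A}$. The cut $C_2$ term in \eqref{E13} is monotone increasing in $\mathcal{A}$ (adding destinations only adds nonnegative $\log$ terms to both the $\sum_{\mathcal{A}\cap\mathcal{D}_{\acute s}}$ and $\sum_{d\in\mathcal{A}}$ sums), so the binding (minimizing) choice is a minimal admissible $\mathcal{A}$: either $\mathcal{A}=\{3\}$ or $\mathcal{A}=\{1,2\}$. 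For $\mathcal{A}=\{3\}$: $\mathcal{A}\cap\mathcal{D}_1=\mathcal{A}\cap\mathcal{D}_2=\{3\}$, giving $\sum_{\acute s\in\mathcal{F}}\lambda\cdot\tfrac12\log_2(1+P/(\lambda\sigma^2_{\mathscr{s}\mathscr{d}}))$ for the source-side piece (using $\acute P_{\acute s}=P/\lambda$) plus $(1-2\lambda)\tfrac12\log_2(1+\acute P_r/\sigma^2_{\mathscr{r}\mathscr{d}})$ for the single relay-destination term, and since there are two sources the source piece doubles but the relay-to-$3$ term appears once; after dividing the sum-rate bound $2R_c$ by $2$ one recovers exactly the second branch in \eqref{E14} with $\acute P_r=P_r/(1-2\lambda)$, so $(1-2\lambda)\tfrac12\log_2(1+\tfrac{P_r}{(1-2\lambda)\sigma^2_{\mathscr{r}\mathscr{d}}})$ becomes $(1-2\lambda)\tfrac14\log_2(\cdots)$ after the halving. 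For $\mathcal{A}=\{1,2\}$: destination~$1$ only decodes source~$1$ and destination~$2$ only decodes source~$2$, so $\mathcal{A}\cap\mathcal{D}_1=\{1\}$, $\mathcal{A}\cap\mathcal{D}_2=\{2\}$, and one should check this bound is never tighter than the $\mathcal{A}=\{3\}$ bound in the symmetric regime (comparing $2\log(1+\tfrac{P}{\lambda\sigma^2_{\mathscr{s}\mathscr{d}}})+2\log(1+\tfrac{P_r}{(1-2\lambda)\sigma^2_{\mathscr{r}\mathscr{d}}})$ against $2\log(1+\tfrac{P}{\lambda\sigma^2_{\mathscr{s}\mathscr{d}}})+\log(1+\tfrac{P_r}{(1-2\lambda)\sigma^2_{\mathscr{r}\mathscr{d}}})$ shows $\mathcal{A}=\{3\}$ is always the smaller); hence $\mathcal{A}=\{3\}$ is the only $C_2$ cut that matters.

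Finally I would handle the $C_1$ term: for $\mathcal{F}=\{1,2\}$, $\sum_{\acute s\in\mathcal{F}}\lambda\cdot\tfrac12\log_2(1+\sum_{d\in\mathcal{D}_{\acute s}}\tfrac{\acute P_{\acute s}}{\sigma^2_{\acute s,d}}+\tfrac{\acute P_{\acute s}}{\sigma^2_{\acute s,r}})$; each source has exactly two intended destinations at variance $\sigma^2_{\mathscr{s}\mathscr{d}}$ plus the relay at $\sigma^2_{\mathscr{s}\mathscr{r}}$, and $\acute P_{\acute s}=P/\lambda$, so this equals $2\lambda\cdot\tfrac12\log_2(1+\tfrac{2P}{\lambda\sigma^2_{\mathscr{s}\mathscr{d}}}+\tfrac{P}{\lambda\sigma^2_{\mathscr{s}\mathscr{r}}})$; dividing by $2$ gives the first branch of \eqref{E14}. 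Taking the minimum over the two surviving branches and the maximum over $\lambda\in[0,\tfrac12]$ yields the claimed bound. The main obstacle is the bookkeeping in the second step — correctly identifying which $\mathcal{A}$ are admissible and arguing monotonicity so that only the minimal ones are active — since a careless enumeration could either miss a tighter cut or carry redundant ones; everything after that is substitution of the symmetric parameters and the factor-of-two normalization from sum-rate to per-source rate.
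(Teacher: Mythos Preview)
Your approach is correct and is precisely the specialization of Proposition~\ref{P3} that the paper intends (the paper gives no separate proof for Corollary~\ref{C5}, treating it as a direct substitution of the symmetric parameters). One small slip: for $\mathcal{A}=\{1,2\}$ the relay contribution in \eqref{E13} is a single term $\tfrac12\log_2\bigl(1+\sum_{d\in\mathcal{A}}\acute P_r/\sigma^2_{r,d}\bigr)=\tfrac12\log_2\bigl(1+\tfrac{2P_r}{(1-2\lambda)\sigma^2_{\mathscr{r}\mathscr{d}}}\bigr)$, not $2\cdot\tfrac12\log_2\bigl(1+\tfrac{P_r}{(1-2\lambda)\sigma^2_{\mathscr{r}\mathscr{d}}}\bigr)$; since $\log_2(1+2x)>\log_2(1+x)$ the comparison with $\mathcal{A}=\{3\}$ still goes the same way and your conclusion is unaffected.
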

Following the approach described above, we can use \eqref{E25} and
obtain a distortion-power function for a symmetric Gaussian
OTN-BR(2,3). This provides a performance bound for comparisons in
Section \ref{SVII}. The bounds for OMARC and OMC-BR can be obtained
in a similar way.

\section{Performance Evaluation }\label{SVII}
Consider a symmetric Gaussian OTN-BR with $N$ source nodes producing
independent Gaussian distributed signals with zero mean and unit
variance, that are to be transmitted to the corresponding
destinations. Each of these continuous signals is quantized using a
$2^{R}$ level Lloyd-Max quantizer. The resulting quantization index
is represented by a binary codeword. Each binary codeword is BPSK
modulated and transmitted through a Gaussian channel. The network
code at the relay is described by a $2^{R}\times2^{R}$ and
$2^{R}\times2^{R}\times2^{R}$ lookup-table for $N=2$ and $N=3$,
respectively. To obtain network codes of rate (approximately)
$\frac{1}{2}$, the relay transmission rate is selected as $R_r=R$
for $N=2$, and $R_r=5$ bits for $N=3$, $R=3$.

At the relay, the received codewords from the sources are (symbol)
MAP decoded and are combined using the proposed nonlinear network
coding. Hence, we refer to them as decode and nonlinear network
coding (DNNC) schemes and denote them by DNNC-Structured, DNNC-C3
and DNNC-C4 indicating, as described in Section \ref{SV}, the design
based on the structured network code, or Corollaries \ref{C3} and
\ref{C4}, respectively. For comparison, we consider MAP decoding
followed by classic linear network coding at the relay and refer to
it as decode and linear network coding (DLNC). In this case, the
binary codewords are represented and linearly combined in
$GF(2^{R_r})$ with coefficients searched and selected to minimize
the average distortion at the destinations. The code thus obtained
is referred to as the best performing linear network code.

For the performance evaluations of symmetric OTN-BRs in the
following, the signal to noise ratio of the channel from any source
to any of its corresponding destination nodes is denoted by $
S\!N\!R^{design}_{\mathscr{s}\mathscr{d}}$ and
$S\!N\!R_{\mathscr{s}\mathscr{d}}$, during the design and
operations, respectively. In the same direction,
$S\!N\!R_{\mathscr{r}\mathscr{d}}$ and
$S\!N\!R_{\mathscr{s}\mathscr{r}}$, respectively demonstrate the
SNRs of the relay-destination channels and the source-relay channels
during the operations. The average reconstruction signal SNR (RSNR),
i.e., $\dfrac{1}{N}\underset{i=1}{\overset{N}\sum}
\dfrac{E\left[X_i^2\right]}{D}$, where $D$ is obtained from
\eqref{E7}, is used as the performance criterion.

\subsection{Basic Comparisons}
 Figs. \ref{F3} and \ref{F4} respectively, present
the performance of an OMC-BR and an OMARC, with two sources, $R=3$,
$R_{NC}=\frac{1}{2}$ and $S\!N\!R_{\mathscr{s}\mathscr{d}}=\rm
-3dB$. These figures and the results for the OTN-BR-(2,3) (not
reported here) demonstrate that the proposed DNNC schemes
substantially outperform the DLNC scheme. Specifically, a RSNR gain of
about 4dB is achieved for all considered networks, when
the source-relay and the relay-destination channels are of good
quality.

The DNNC-C4 (with the code in Fig. \ref{F2}(a)) is designed assuming very noisy source-destination channels and is simpler to design in comparison
to DNNC-C3. Both DNNC-C3 and DNNC-C4, are designed assuming noiseless relay-destination channels. In such settings, however, as evident in Figs. \ref{F3} to \ref{F5}, DNNC-C3 outperforms DNNC-C4.

Examining the performance of the proposed DNNC-C3 scheme designed
for $S\!N\!R_{\mathscr{s}\mathscr{d}}^{design}$ of -3dB and 1dB in
Figs. \ref{F3} to \ref{F5} demonstrates that, as expected, when the
operating channel conditions are closer to the design setting, a
higher performance is obtained. It is interesting to note, however,
that the sensitivity of the performance to a mismatch of design and
operations channel SNRs is insignificant. Our experiments over
severely asymmetric networks indicate that the proposed network code
in DNNC-C3 is designed to further assist the communication of
source(s) with lower source-destination channel SNR.

Also, comparing Figs. 4 and 5, it is evident that when the quality
of the source-destination channels improves, as expected, the RSNR
is enhanced and ultimately reaches that of a 3-bit Lloyd Max
quantizer ($\sim15$dB). For noisy source-destination channels, the
RSNR does not reach this limit even if the source-relay and the
relay-destination channels are noiseless. This is due to the fact
that the network coding at the relay is not a one-to-one mapping,
and as expected, the corresponding relay signal is only meant to
assist the source-destination transmission. In the current example,
the relay receives $R_1=3$ and $R_2=3$ bit codewords from sources 1
and 2, respectively, while produces only one $R_r=3$ bit codeword.

In our studies, we have also examined another scheme (not reported
here), which is referred to as estimate and forward nonlinear
network coding (ENNC). In ENNC, the source signal transmitted by
each source is first estimated at the relay using an MMSE decoder.
Subsequently, the estimated source signals at the relay are mapped
to an output codeword using an optimized vector quantization scheme.
According to our simulations, the ENNC provides only a limited gain
over the DNNC at the cost of increased complexity.

\subsection{Effects of Rates of Network Coding and Quantization}
Fig. \ref{F6} presents the performance of OMC-BR network with $N=3$,
for different network coding schemes and rates. The operation
source-destination channel SNRs are set to
$S\!N\!R_{\mathscr{s}\mathscr{d}}=\rm-3dB$. It is observed that,
with $R=3$, $R_r=5$ and $R_{NC}=\frac{1}{2}$, the proposed DNNC
schemes achieve a gain of about 4dB in the RSNR compared to DLNC.
Therefore, noting Figs. \ref{F3}, \ref{F4} and \ref{F6}, it is
evident that similar gains are achieved in the considered networks
with $N=2$ and $N=3$ for a given source-destination channel SNR, and
rates of network coding and quantization. Fig. \ref{F6} also depicts
the performance of the DLNC and the proposed DNNC schemes for
$R_r=R=3$, $R_{NC}=\frac{1}{3}$. As expected, for a given
quantization rate, the performance gain provided by the DNNC schemes
is greater, when the network code rate $R_{NC}$ is larger.

Figs. \ref{F7} and \ref{F8}, respectively depict the resulting
average distortion in an OMARC and an OTN-BR-(2,3) as a function of
rate for $R\in\{2,3,4\}$. It is observed that the proposed DNNC
scheme provides a larger performance gain with respect to DLNC, when
the quantization rate is greater.

\subsection{Performance of the Proposed Structured Nonlinear Network Code}
The performance of the proposed structured nonlinear network code is
depicted in Figs. \ref{F3} to \ref{F6}. This code is presented in
Fig. \ref{F2}(b) for $N=2$ and in \eqref{E12} for arbitrary $N$.
As evident, the structured network code performs closely similar to
DNNC-C3 especially for high quality relay-destination channels.
However, this performance is obtained with much smaller design
complexity. Our experiments indicate that the nearly symmetric
structure of the proposed structured network code leads to almost
identical performance at different receivers over a symmetric
network. For example, in a symmetric OMC-BR with $N=3$,
$S\!N\!R_{\mathscr{s}\mathscr{d}}=\rm-3dB$ and
$S\!N\!R_{\mathscr{r}\mathscr{d}}=\rm7dB$, the average RSNR at the
destinations 1, 2 and 3 is 8.42dB, 8.74dB and 8.84dB, respectively.

Considering symmetric network settings and the definition of the
average distortion in \eqref{E7}, each of the proposed DNNC schemes
in the three OTN-BR instances of interest provides the same
performance for given $S\!N\!R_{\mathscr{s}\mathscr{d}}$ and
$S\!N\!R_{\mathscr{r}\mathscr{d}}$. This issue is evident in the
simulation results depicted in Figs. \ref{F3}-\ref{F5}, \ref{F7} and
\ref{F8}. However, DLNC provides a better performance in OMARC in
comparison to other considered networks as it is seen in Figs.
\ref{F7} and \ref{F8}.

\subsection{Distortion Power Trade-off}
Fig. \ref{F9} demonstrates the trade-off of average distortion and
power of the sources ($S\!N\!R_{\mathscr{s}\mathscr{d}}$) for an
OTN-BR-(2,3). In this figure, the performance of the proposed DNNC
with structured nonlinear network coding and DLNC are depicted
together with the separate source and channel-network coding bound
obtained from Corollary \ref{C5} for comparison. In Fig. \ref{F9},
it is seen that the proposed DNNC with structured network code
compared to DLNC reduces the gap to the bound by more than $50\%$.
It is also evident in this figure that, (i) for high
$S\!N\!R_{\mathscr{s}\mathscr{d}}$ both DNNC and DLNC result in
identical residual distortion equal to that of a 3-bit Lloyd-Max
quantizer, that is due to source coding; and (ii) an improved
$S\!N\!R_{\mathscr{r}\mathscr{d}}$ enhances the performance gain
provided by the proposed DNNC scheme. Our simulation results (not
reported here) confirm similar observations for the OMARC and
OMC-BR.

\section{Conclusions}\label{SVIII}
In this paper, network code design for the orthogonal two-hop
network with broadcasting relay and constrained complexity was
investigated. Taking a joint source-channel-network coding approach,
the network code at the relay was designed to minimize the
average distortion at the destinations. Decomposing the distortion into its components enabled the development of an
effective network code design algorithm based on simulated annealing. The
resulting network code is nonlinear and outperforms the best
performing linear network codes. This indicates the insufficiency of
linear network coding in complexity constrained wireless networks
with a MMSE design criteria. The results also show that the
sensitivity of the proposed nonlinear network code to a mismatch of
design and operations channel SNRs is insignificant and the
performance gain provided by the nonlinear code compared to the linear
code is greater, when the network code rate is larger. In comparison
to the separate source and channel-network coding bound, the
proposed nonlinear network coding at the relay in contrast to the linear code
reduces the gap to the bound by more than $50\%$. This fact
indicates the effectiveness of the proposed decode and nonlinear network coding schemes for the OTN-BR and a
promising research direction.

% if have a single appendix:
%\appendix[Proof of the Zonklar Equations]
% or
%\appendix  % for no appendix heading
% do not use \section anymore after \appendix, only \section*
% is possibly needed

% use appendices with more than one appendix
% then use \section to start each appendix
% you must declare a \section before using any
% \subsection or using \label (\appendices by itself
% starts a section numbered zero.)
%

\appendices
\section{Proof of Proposition \ref{P1}}\label{A1}
At the destination $d\in\mathcal{D}$, based on the received signals
$\textit{\textbf{Y}}_{d,\mathcal{S}_d}$ and
$\textit{\textbf{Y}}_{d,r}$, the MMSE estimation of the transmitted
signal $X_s,\;s\in\mathcal{S}_d$ is given by:
\begin{equation}\label{E15}
\hat{x}_{s,d}=g_s\left(\textit{\textbf{Y}}_{d,\mathcal{S}_d},\textit{\textbf{Y}}_{d,r}\right)=E\left[X_s|\textit{\textbf{Y}}_{d,\mathcal{S}_d},\textit{\textbf{Y}}_{d,r}\right]=\underset{I_{\mathcal{S}_d}}{\sum}E\left[X_s|I_{\mathcal{S}_d}\right]P\left(I_{\mathcal{S}_d}|\textit{\textbf{Y}}_{d,\mathcal{S}_d},\textit{\textbf{Y}}_{d,r}\right).
\end{equation}
Since the sources are independent, we have
$E\left[X_s|I_{\mathcal{S}_d}\right]=E\left[X_s|I_s\right]$ and from
\eqref{E15} we obtain
\begin{equation}\label{E16}
g_s\left(\textit{\textbf{Y}}_{d,\mathcal{S}_d},\textit{\textbf{Y}}_{d,r}\right)=\underset{I_{\mathcal{S}_d}}{\sum}E\left[X_s|I_s\right]P\left(I_{\mathcal{S}_d}|\textit{\textbf{Y}}_{d,\mathcal{S}_d},\textit{\textbf{Y}}_{d,r}\right){.}
\end{equation}
Using Bayes' theorem, we have
\begin{equation}\label{E17}
P\left(I_{\mathcal{S}_d}|\textit{\textbf{Y}}_{d,\mathcal{S}_d},\textit{\textbf{Y}}_{d,r}\right)=\frac{P\left(\textit{\textbf{Y}}_{d,\mathcal{S}_d},\textit{\textbf{Y}}_{d,r}|I_{\mathcal{S}_d}\right)P(I_{\mathcal{S}_d})}{P\left(\textit{\textbf{Y}}_{d,\mathcal{S}_d},\textit{\textbf{Y}}_{d,r}\right)},
\end{equation}
\vspace{-2pt}where\vspace{-2pt}
\begin{equation}\label{E18}
P\left(\textit{\textbf{Y}}_{d,\mathcal{S}_d},\textit{\textbf{Y}}_{d,r}|I_{\mathcal{S}_d}\right)=P\left(\textit{\textbf{Y}}_{d,\mathcal{S}_d}|I_{\mathcal{S}_d}\right)P\left(\textit{\textbf{Y}}_{d,r}|I_{\mathcal{S}_d}\right).
\end{equation}
Thus, using \eqref{E17} and \eqref{E18}, equation \eqref{E16} is
rewritten as
\begin{equation}\label{E19}
g_s\left(\textit{\textbf{Y}}_{d,\mathcal{S}_d},\textit{\textbf{Y}}_{d,r}\right)=\frac{\underset{I_{\mathcal{S}_d}}{\sum}E\left[X_s|I_s\right]P\left(\textit{\textbf{Y}}_{d,\mathcal{S}_d}|I_{\mathcal{S}_d}\right)P\left(\textit{\textbf{Y}}_{d,r}|I_{\mathcal{S}_d}\right)P(I_{\mathcal{S}_d})}{P\left(\textit{\textbf{Y}}_{d,\mathcal{S}_d},\textit{\textbf{Y}}_{d,r}\right)},
\end{equation}
\vspace{-4pt}where\vspace{-4pt}
\begin{equation}\label{E20}
P\left(\textit{\textbf{Y}}_{d,r}|I_{\mathcal{S}_d}\right)=\underset{I_r}{\sum}P\left(\textit{\textbf{Y}}_{d,r}|I_r\right)P\left(I_r|I_{\mathcal{S}_d}\right){.}
\end{equation}
Finally, using \eqref{E19} and \eqref{E20}, we can obtain
\eqref{E2}. Based on \eqref{E18}, the term
$K=P\left(\textit{\textbf{Y}}_{d,\mathcal{S}_d},\textit{\textbf{Y}}_{d,r}\right)$
is given by \eqref{E3} as a factor normalizing the sum of
probabilities to one.
% you can choose not to have a title for an appendix
% if you want by leaving the argument blank

\section{Proof of Proposition \ref{P3}}\label{A2}
Consider $\mathcal{D}_{\mathcal{F}}=\underset{s\in\mathcal{F}}\cup
\mathcal{D}_s$,
$\mathcal{S}_{\mathcal{A}}=\underset{d\in\mathcal{A}}{\cup}\mathcal{S}_d$
and
$\textit{\textbf{Y}}_{\mathcal{A},\mathcal{F}}=\left\{\textit{\textbf{Y}}_{d,s}:d\in\mathcal{A},\;s\in\mathcal{F},\;(s,d)\in\mathcal{G}\right\}$,
$\forall{\mathcal{F}}\subset{S}$ and
$\forall{\mathcal{A}}\subset{\mathcal{D}_\mathcal{F}}$, where
$\mathcal{G}$ is defined in \eqref{E1}. In view of the max-flow
min-cut theorem \cite{R6} for the OTN-BR, an outer bound for the
capacity region is the set of rate vectors $(R_{c,1},R_{c,2},\ldots
R_{c,N})$ satisfying
\begin{equation}\label{E21}
\begin{split}
&\underset{s\in\mathcal{F}}{\sum}R_{c,s}\leq\operatorname{min}\left\{I\left(X_{\mathcal{F}};\textit{\textbf{Y}}_{\mathcal{D}_\mathcal{F},\mathcal{F}},\textit{\textbf{Y}}_{r,\mathcal{F}}|X_{\mathcal{F}^c},X_r,Q\right),\right.\\
&\left.I\left(X_{\mathcal{F}},X_r;\textit{\textbf{Y}}_{\mathcal{A},r},\textit{\textbf{Y}}_{\mathcal{A},\mathcal{F}}|X_{\mathcal{F}^c},Q\right):\;\forall{\mathcal{A}}\subset{\mathcal{D}_{\mathcal{F}}},\;\mathcal{S}_{\mathcal{A}}\cap\mathcal{F}=\mathcal{F}\right\}
\end{split}
\end{equation}
over all distributions $P(Q)\left(\prod_{s=1}^N
P(X_s|Q)\right)P\left(X_r|X_{\mathcal{S}},Q\right)$ with $|Q|\leq
N+1$. Considering time-multiplexing in OTN-BR as described before,
we have
\begin{equation}\label{E22}
\begin{split}
&\underset{s\in\mathcal{F}}{\sum}R_{c,s}\leq
\underset{\substack{0\leq\lambda_{\acute{s}}\leq1\\
{\acute{s}}\in\mathcal{S}}}{\operatorname{max}}\;\operatorname{min}\left\{\underset{{\acute{s}}\in\mathcal{F}}{\sum}\lambda_{\acute{s}}
I\left(X_{\acute{s}};\textit{\textbf{Y}}_{\mathcal{D}_{\acute{s}},\{{\acute{s}}\}},\textit{\textbf{Y}}_{r,{\acute{s}}}|Q\right),\right.\\
&\left.\underset{{\acute{s}}\in\mathcal{F}}{\sum}\lambda_{\acute{s}}
I\left(X_{\acute{s}};\textit{\textbf{Y}}_{\mathcal{A},\{{\acute{s}}\}}|Q\right)+\left(1-\underset{{\acute{s}}\in\mathcal{S}}{\sum}\lambda_{\acute{s}}\right)I\left(X_r;\textit{\textbf{Y}}_{\mathcal{A},\{r\}}|Q\right):\;\forall{\mathcal{A}}\subset{\mathcal{D}_\mathcal{F}},\;\mathcal{S}_{\mathcal{A}}\cap\mathcal{F}=\mathcal{F}\right\}\textrm{.}
\end{split}
\end{equation}

For Gaussian channels, the mutual information terms in \eqref{E22}
are maximized when the distributions of
$X_k,\;k\in\left\{\mathcal{S}\cup\left\{r\right\}\right\}$ are
Gaussian. Noting that in the considered settings we have
$\underset{P(X_k)}{\operatorname{max}}I\left(X_k;\textit{\textbf{Y}}_{\mathcal{B},\{k\}}\right)=\dfrac{1}{2}\log_2\left(1+\sum\limits_{\substack{j\in\mathcal{B},\\(k,j)\in\mathcal{G}}}\dfrac{P_k}{\sigma^2_{k,j}}\right)$,
where $\mathcal{B}\subset\{\mathcal{D}\cup\{r\}\}$ and
$k\in\left\{\mathcal{S}\cup\left\{r\right\}\right\}$ \cite{R6}, the
proof of Proposition \ref{P3} is complete.
% use section* for acknowledgement
%\section*{Acknowledgment}
%The authors would like to thank...

% Can use something like this to put references on a page
% by themselves when using endfloat and the captionsoff option.
\ifCLASSOPTIONcaptionsoff
  \newpage
\fi

% trigger a \newpage just before the given reference
% number - used to balance the columns on the last page
% adjust value as needed - may need to be readjusted if
% the document is modified later
%\IEEEtriggeratref{8}
% The "triggered" command can be changed if desired:
%\IEEEtriggercmd{\enlargethispage{-5in}}

% references section

% can use a bibliography generated by BibTeX as a .bbl file
% BibTeX documentation can be easily obtained at:
% http://www.ctan.org/tex-archive/biblio/bibtex/contrib/doc/
% The IEEEtran BibTeX style support page is at:
% http://www.michaelshell.org/tex/ieeetran/bibtex/
\bibliographystyle{IEEEtran}
% argument is your BibTeX string definitions and bibliography database(s)
\bibliography{IEEEabrv,Ref}
\begin{figure}[h!]
\centering
\includegraphics[width=4in]{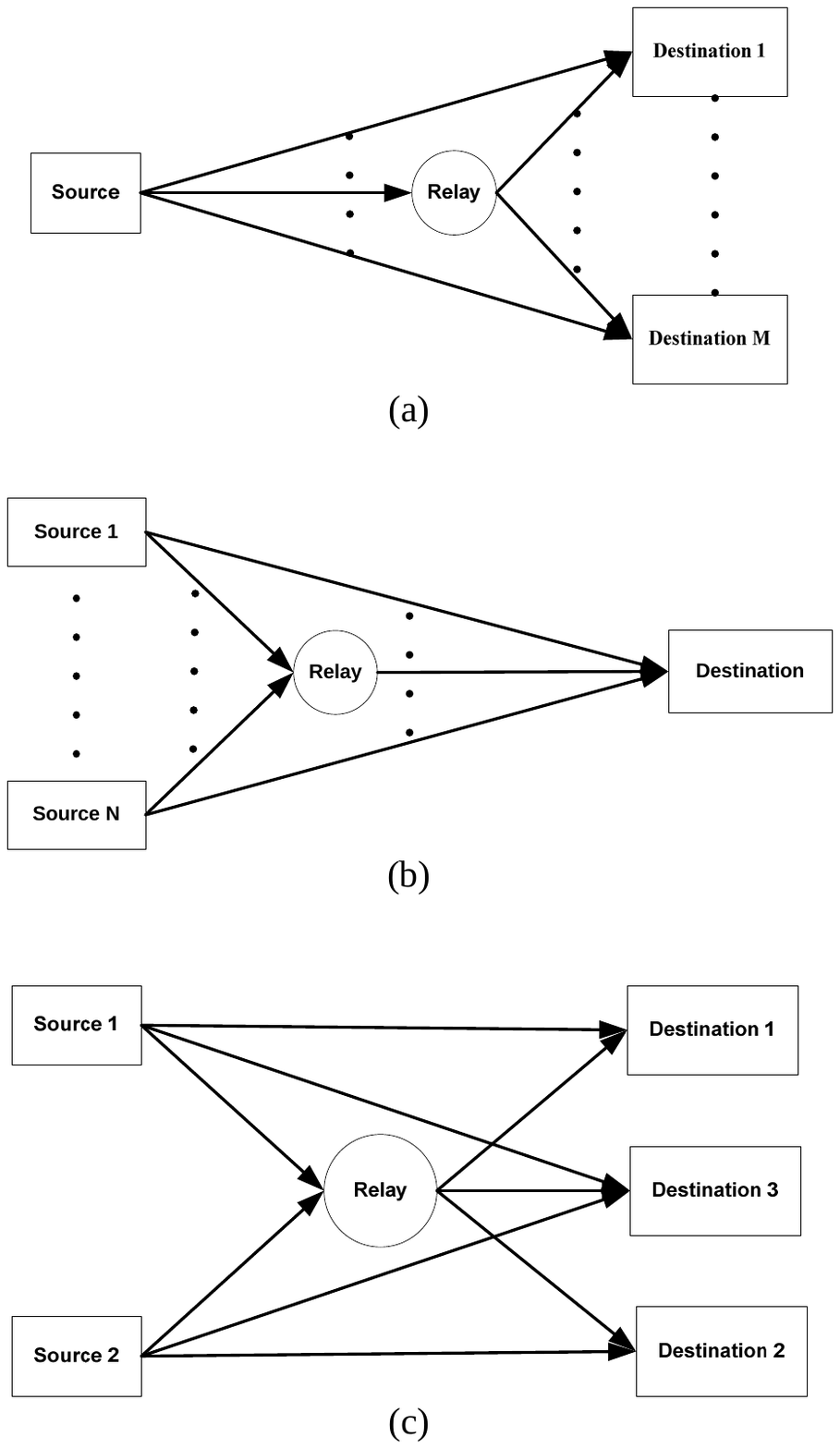}
\caption{(a) BRC (b) MARC and (c) TN-BR-(2,3).}\label{F1}
\end{figure}

\begin{figure}[h!]
\centering
\includegraphics[width=4in]{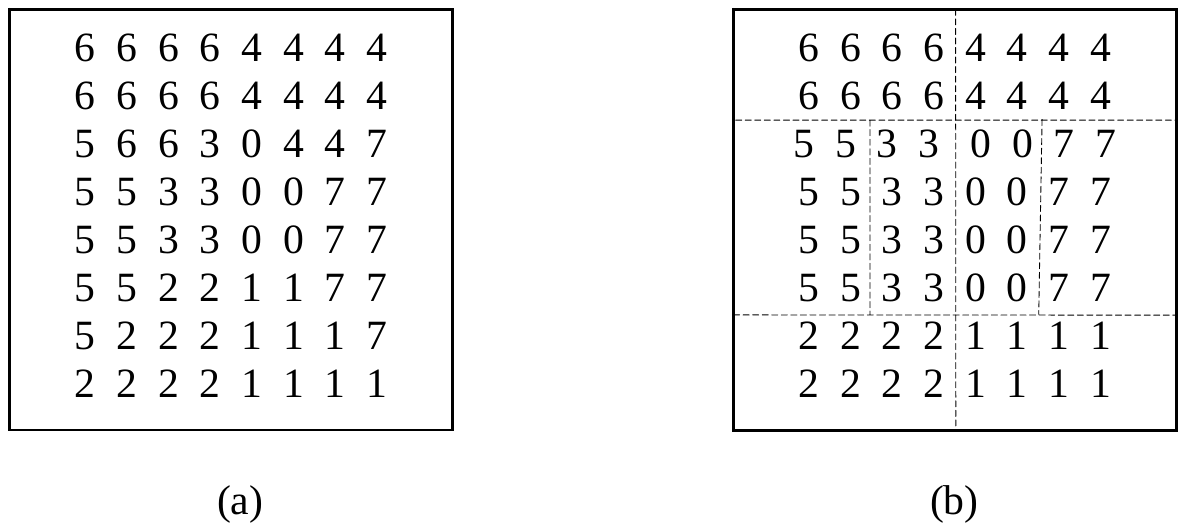}
\caption{Nonlinear network codes for an OTN-BR with $N=2$ and
$S\!N\!R_{\mathscr{s}\mathscr{r}}=\rm10dB$, (a) using DNNC-C4 and
(b) structured network coding.}\label{F2}
\end{figure}

\begin{figure}[h!]
\centering
\includegraphics[width=4in]{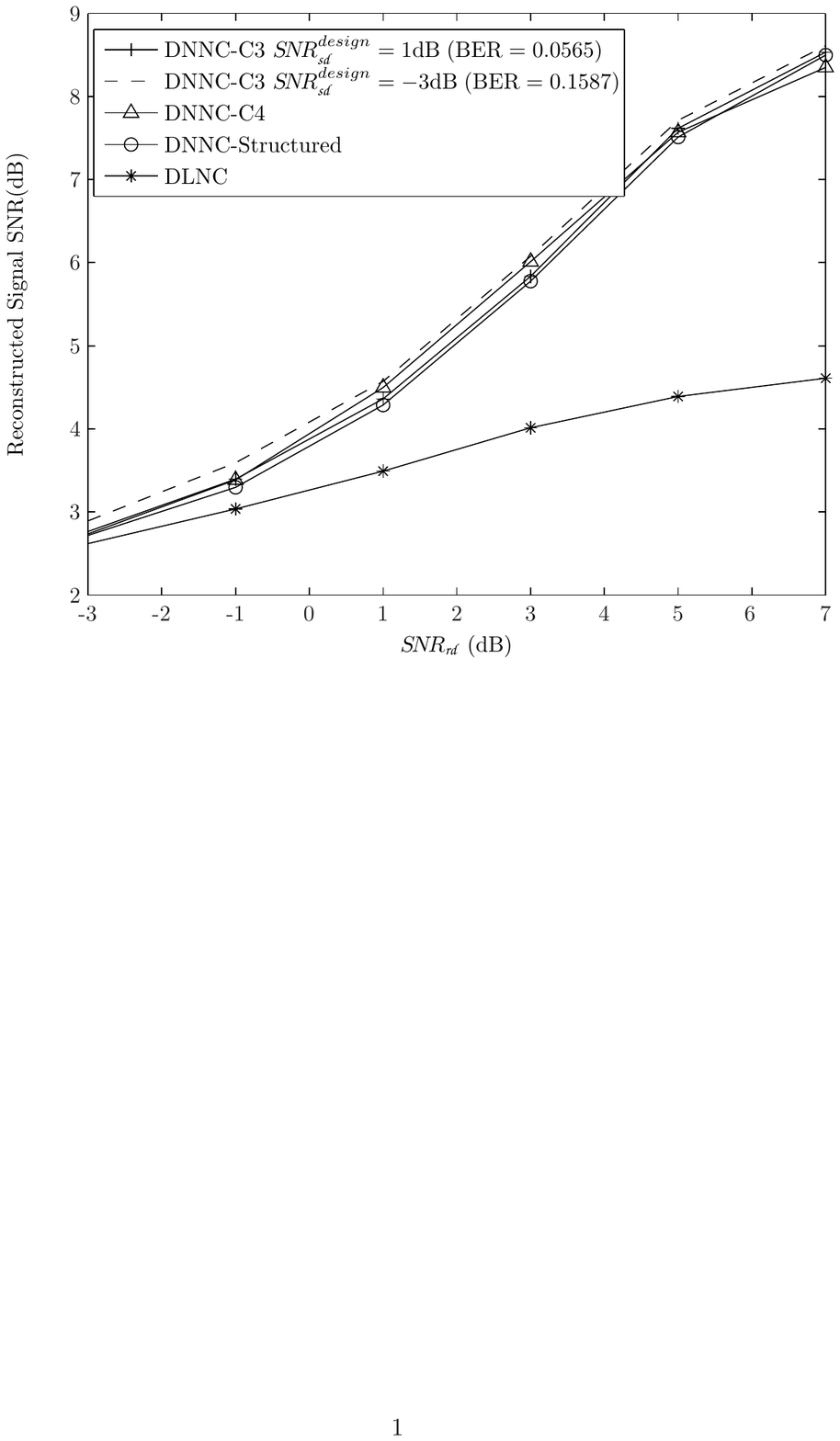}
\caption{Performance of the proposed DNNC and DLNC for an OMC-BR
with two sources, $R_r=R=3$,
$S\!N\!R_{\mathscr{s}\mathscr{r}}=\rm10dB$ and
$S\!N\!R_{\mathscr{s}\mathscr{d}}=\rm -3dB$.}\label{F3}
\end{figure}

\begin{figure}[h!]
\centering
\includegraphics[width=4in]{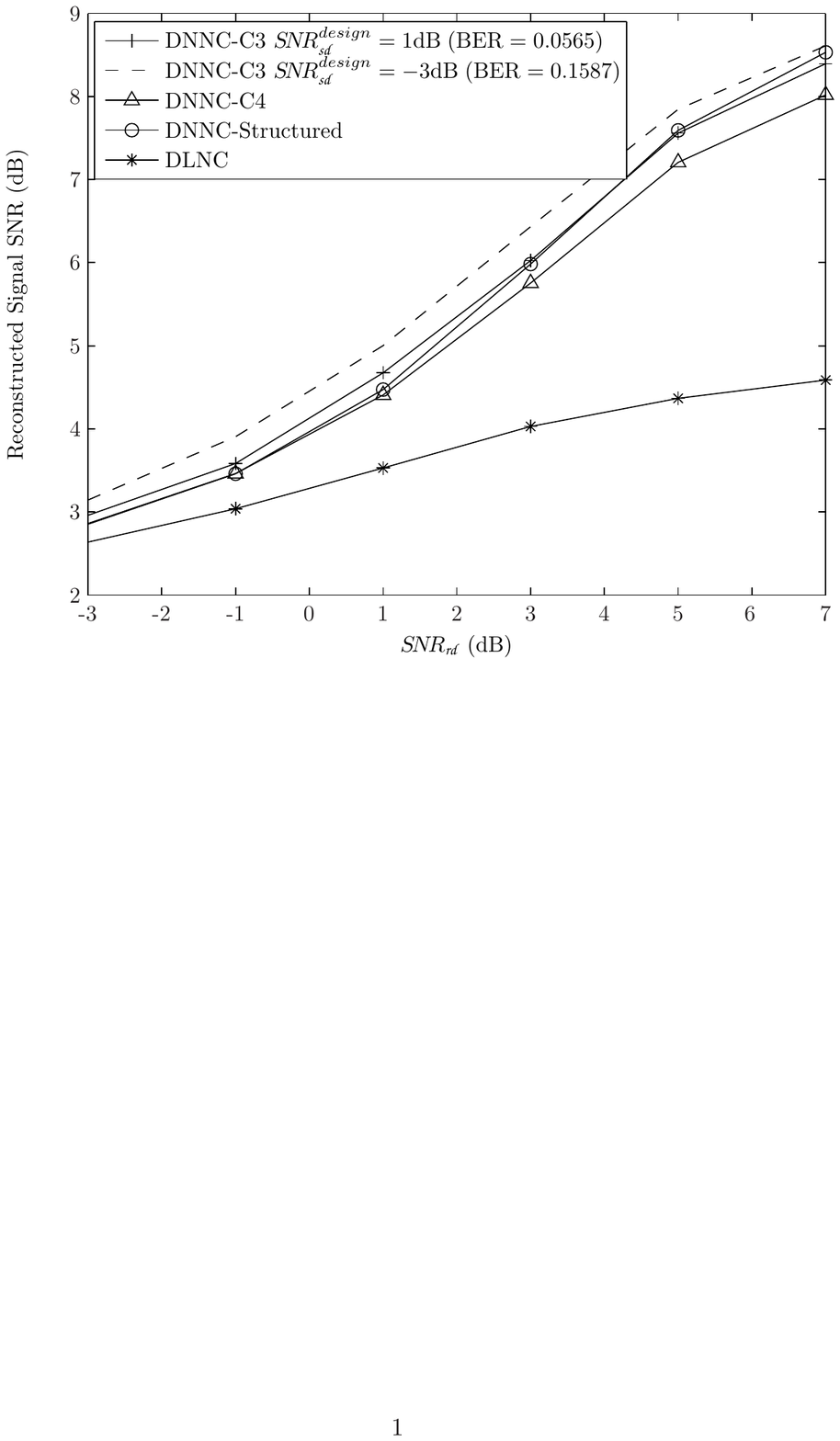}
\caption{Performance of the proposed DNNC and DLNC for an OMARC with
two sources, $R_r=R=3$, $S\!N\!R_{\mathscr{s}\mathscr{r}}=\rm 10dB$
and $S\!N\!R_{\mathscr{s}\mathscr{d}}=\rm -3dB$.}\label{F4}
\end{figure}

\begin{figure}[h!]
\centering
\includegraphics[width=4in]{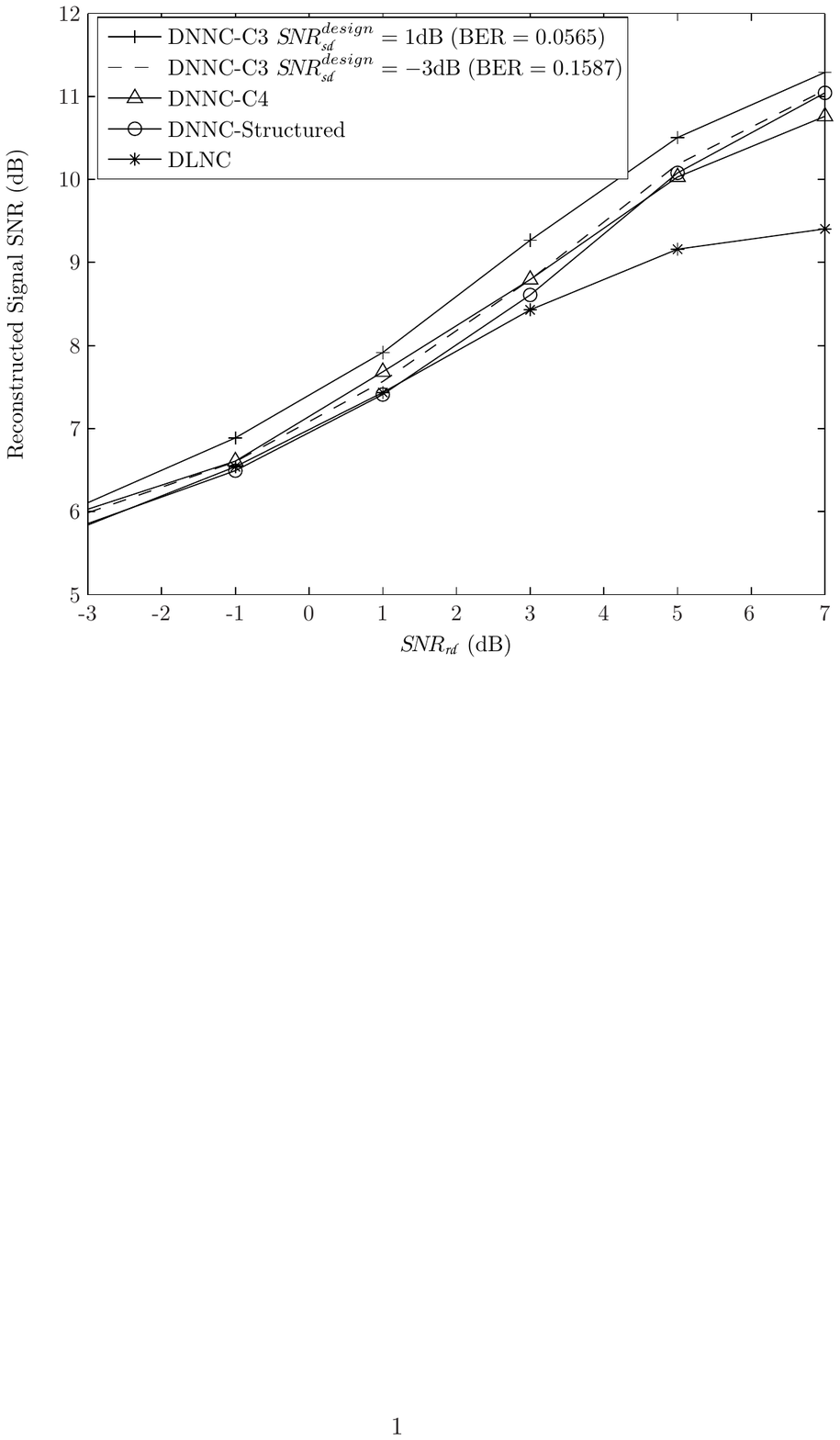}
\caption{Performance of the proposed DNNC and DLNC for an OMARC with
two sources, $R_r=R=3$, $S\!N\!R_{\mathscr{s}\mathscr{r}}=\rm 10dB$
and $S\!N\!R_{\mathscr{s}\mathscr{d}}=\rm 1dB$.}\label{F5}
\end{figure}

\begin{figure}[h!]
\centering
\includegraphics[width=4in]{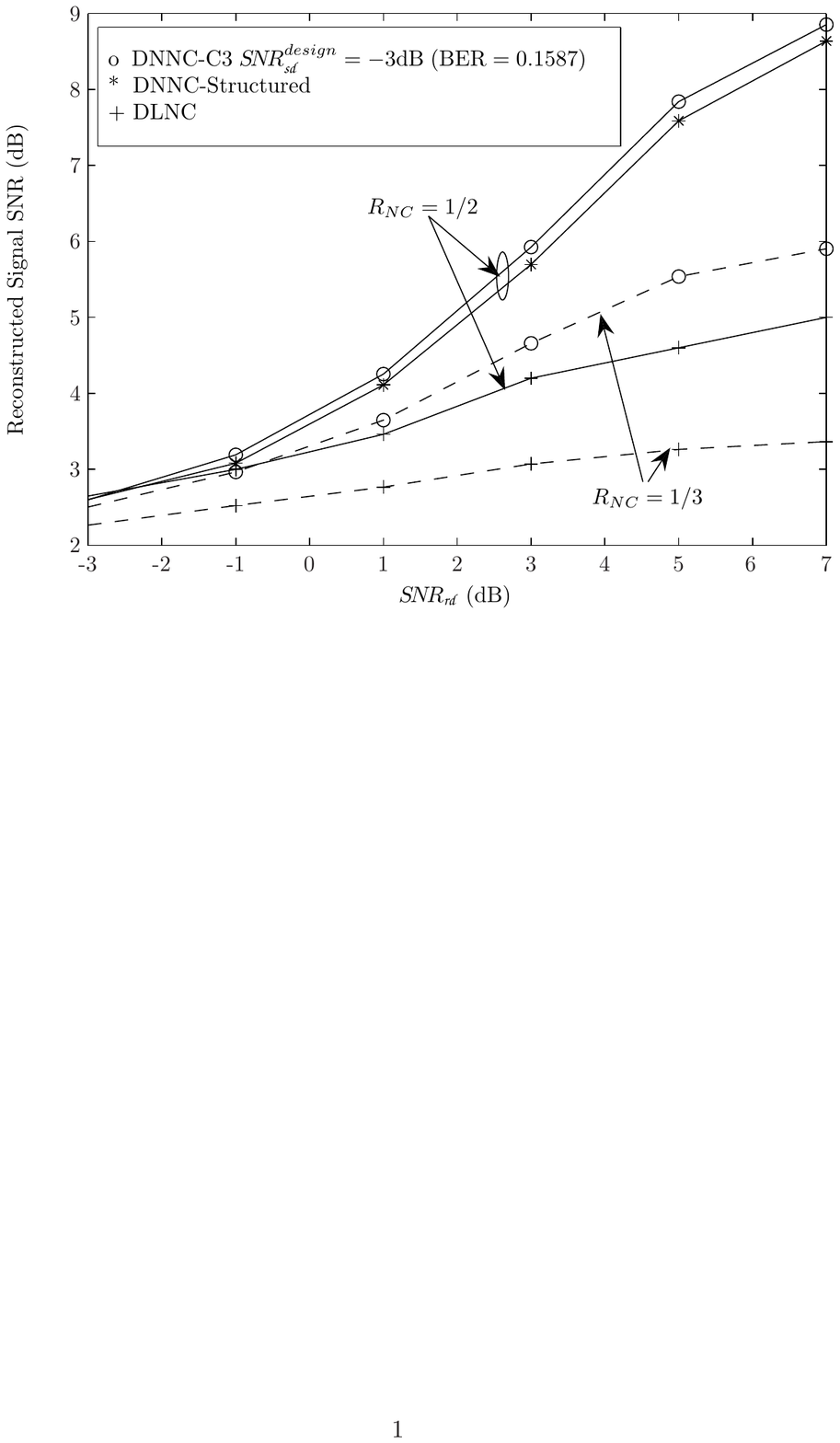}
\caption{Performance of the proposed DNNC and DLNC for an OMC-BR
with three sources, $S\!N\!R_{\mathscr{s}\mathscr{r}}=\rm 10dB$ and
$S\!N\!R_{\mathscr{s}\mathscr{d}}=\rm -3dB$.}\label{F6}
\end{figure}

\begin{figure}[h!]
\centering
\includegraphics[width=4in]{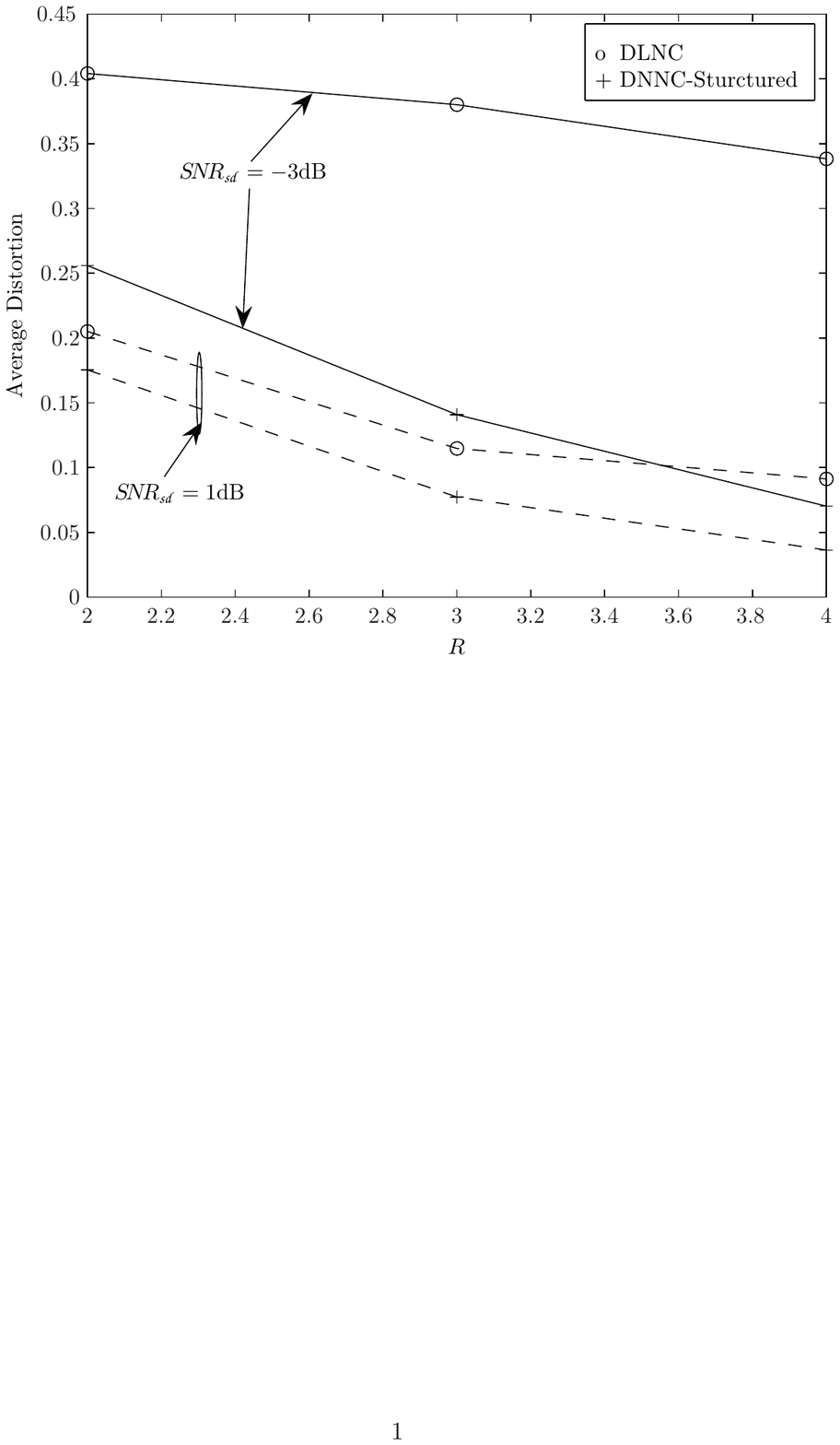}
\caption{Average distortion vs. rate for an OMARC with two sources,
$R_r=R$, $S\!N\!R_{\mathscr{s}\mathscr{r}}=\rm 10dB$ and
$S\!N\!R_{\mathscr{r}\mathscr{d}}=\rm 7dB$.}\label{F7}
\end{figure}

\begin{figure}[h!]
\centering
\includegraphics[width=4in]{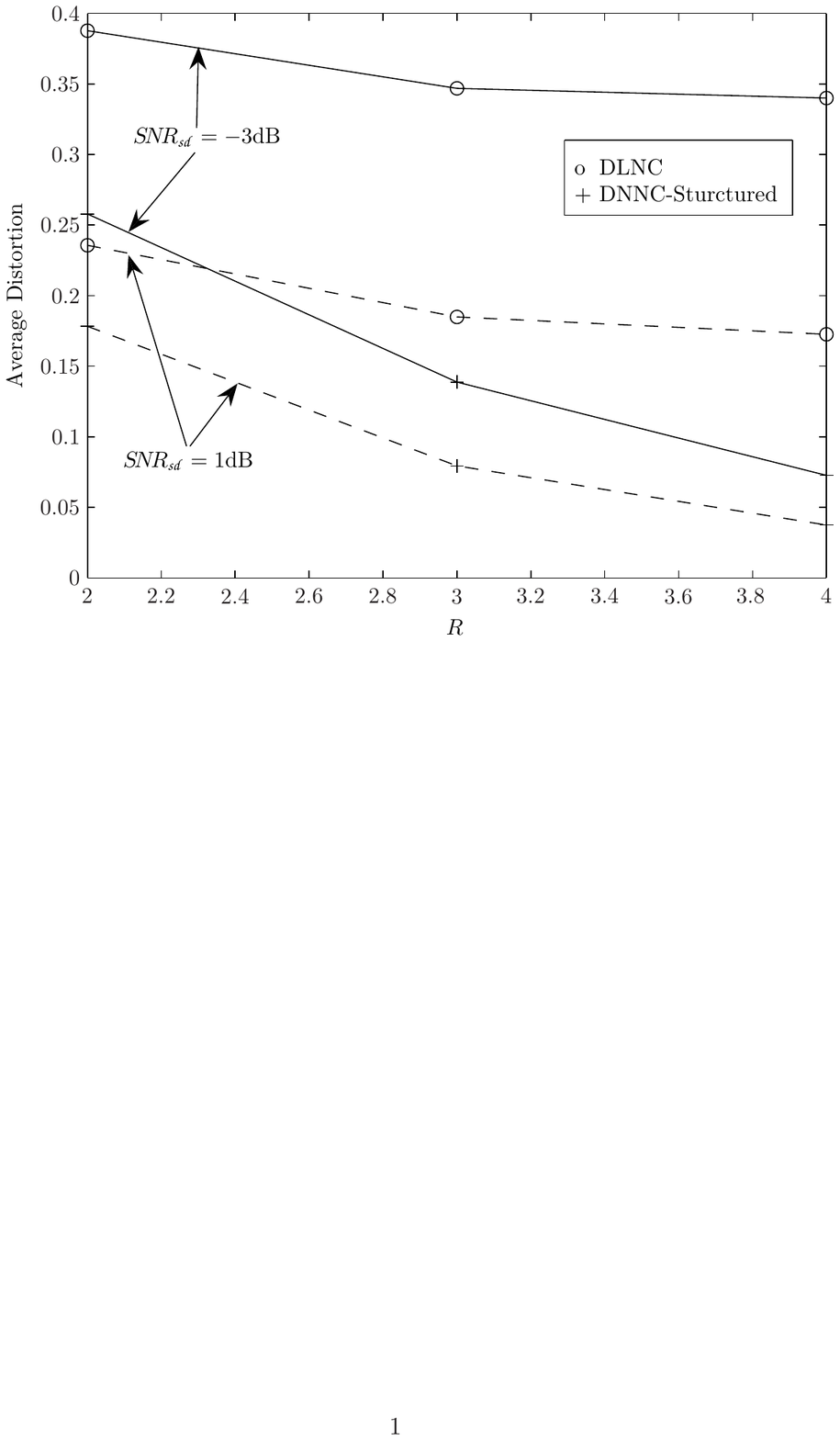}
\caption{Average distortion vs. rate for an OTN-BR-(2,3) with
$R_r=R$, $S\!N\!R_{\mathscr{s}\mathscr{r}}=\rm 10dB$ and
$S\!N\!R_{\mathscr{r}\mathscr{d}}=\rm 7dB$.}\label{F8}
\end{figure}

\begin{figure}[h!]
\centering
\includegraphics[width=4in]{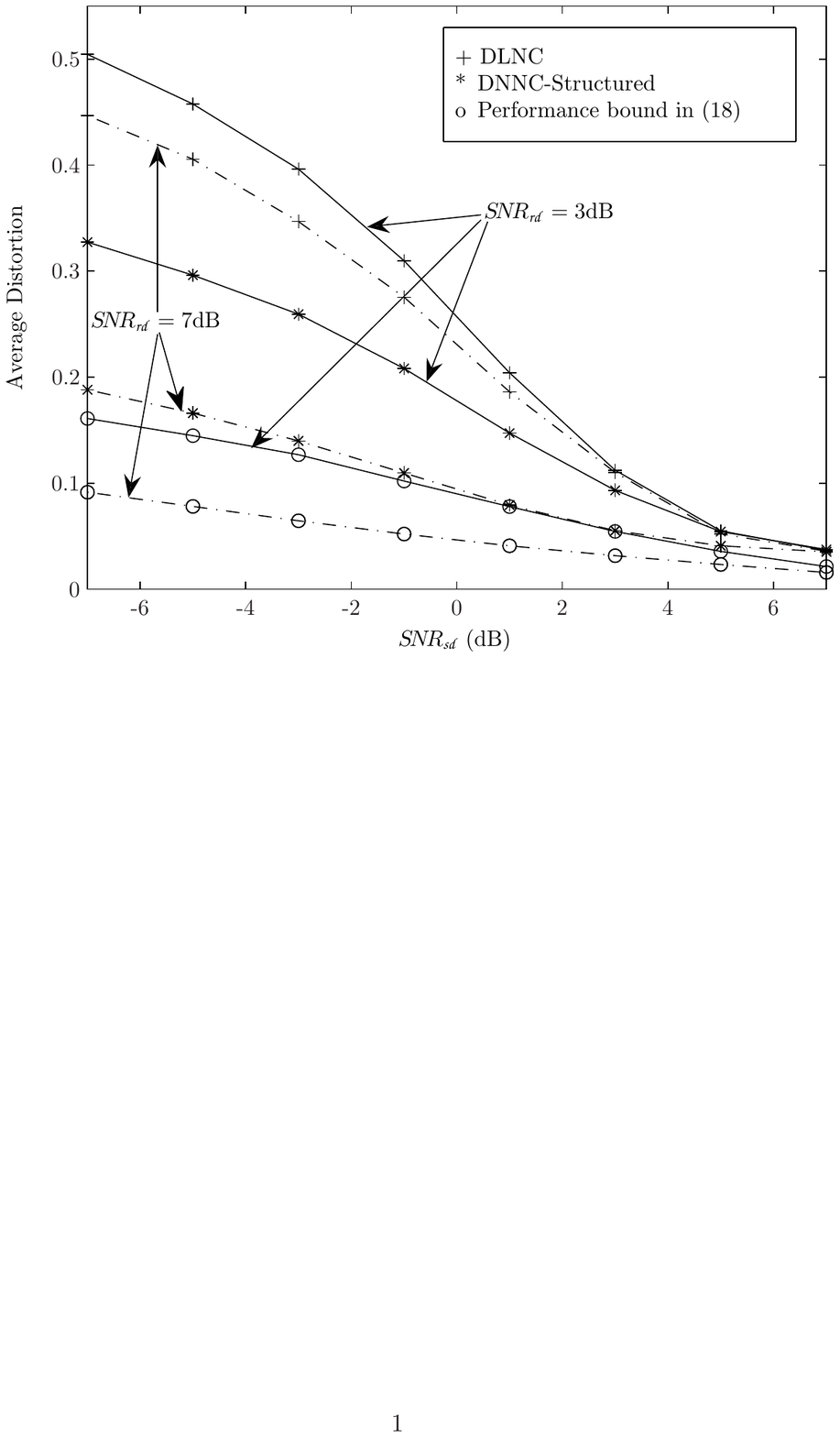}
\caption{Average distortion vs. source-destination channel SNR for
an OTN-BR-(2,3) with two sources, $R_r=R=3$ and
$S\!N\!R_{\mathscr{s}\mathscr{r}}=\rm 10dB$. Performance of linear
and proposed structured nonlinear network coding at the relay in
comparison with the performance bound in \eqref{E14}.}\label{F9}
\end{figure}

\end{document}